\documentclass[a4paper, UKenglish, cleveref, thm-restate]{lipics-v2021}

\hideLIPIcs  

\usepackage{tabularx, environ}

\makeatletter

\newcolumntype{\expand}{}
\long\@namedef{NC@rewrite@\string\expand}{\expandafter\NC@find}

\NewEnviron{problem}[2][]{%
  \def\problem@arg{#1}%
  \def\problem@framed{framed}%
  \def\problem@lined{lined}%
  \def\problem@doublelined{doublelined}%
  \ifx\problem@arg\@empty%
    \def\problem@hline{}%
  \else%
    \ifx\problem@arg\problem@doublelined%
      \def\problem@hline{\hline\hline}%
    \else%
      \def\problem@hline{\hline}%
    \fi%
  \fi%
  \ifx\problem@arg\problem@framed%
    \def\problem@tablelayout{|>{\bfseries}lX|c}%
    \def\problem@title{\multicolumn{2}{|%
      >{\raisebox{-\fboxsep}}%
      p{\dimexpr\textwidth-4\fboxsep-2\arrayrulewidth\relax}%
      |}{%
        \textsc{\Large #2}%
      }}%
  \else
    \def\problem@tablelayout{>{\bfseries}lXc}%
    \def\problem@title{\multicolumn{2}{>%
      {\raisebox{-\fboxsep}}%
      p{\dimexpr\textwidth-4\fboxsep\relax}%
      }{%
        \textsc{\Large #2}%
      }}%
  \fi%
  \smallbreak\noindent%
  \begin{tabularx}{\textwidth}{\expand\problem@tablelayout}%
    \problem@hline%
    \problem@title\\%
    \BODY\\\problem@hline%
  \end{tabularx}%
  \smallbreak%
}
\makeatother


\bibliographystyle{plainurl} 

\title{Convolution and Knapsack in Higher Dimensions}


\author{Kilian Grage}{Kiel University, Germany}{kilian-g@t-online.de}{}{}

\author{Klaus Jansen}{Kiel University, Germany}{kj@informatik.uni-kiel.de}{https://orcid.org/0000-0001-8358-6796}{Supported by the German Research Foundation (DFG) project JA 612/25-1.}
\author{Björn Schumacher}{Kiel University, Germany}{bsch@informatik.uni-kiel.de}{https://orcid.org/0009-0002-2448-4794}{Supported by the German Research Foundation (DFG) project JA 612/25-1.}


\authorrunning{K. Grage, K. Jansen, and B. Schumacher} 

\Copyright{Kilian Grage and Klaus Jansen and Björn Schumacher} 

\begin{CCSXML}
<ccs2012>
<concept>
<concept_id>10003752.10003777.10003778</concept_id>
<concept_desc>Theory of computation~Complexity classes</concept_desc>
<concept_significance>500</concept_significance>
</concept>
<concept>
<concept_id>10003752.10003809.10010052</concept_id>
<concept_desc>Theory of computation~Parameterized complexity and exact algorithms</concept_desc>
<concept_significance>500</concept_significance>
</concept>
</ccs2012>
\end{CCSXML}

\ccsdesc[500]{Theory of computation~Complexity classes}
\ccsdesc[500]{Theory of computation~Parameterized complexity and exact algorithms}

\keywords{Knapsack, Convolution, Integer Linear Programming} 






\nolinenumbers 

\EventEditors{Pat Morin and Eunjin Oh}
\EventNoEds{2}
\EventLongTitle{19th International Symposium on Algorithms and Data Structures (WADS 2025)}
\EventShortTitle{WADS 2025}
\EventAcronym{WADS}
\EventYear{2025}
\EventDate{August 11--15, 2025}
\EventLocation{York University, Toronto, Canada}
\EventLogo{}
\SeriesVolume{349}
\ArticleNo{41}

%
\usepackage{todonotes}

\usepackage{tikz}
\usetikzlibrary{patterns, shapes.multipart, shapes.geometric, fadings,
  arrows.meta, math, perspective, 3d, decorations.pathreplacing}

\usepackage{mathtools}
\usepackage{esvect}

\newcommand{\maxconv}[0]{{\textsc{$d$-MaxConv}}}
\newcommand{\maxconvup}[0]{{\textsc{$d$-MaxConv UpperBound}}}
\newcommand{\supadd}[0]{{\textsc{$d$-SuperAdditivity Testing}}}
\newcommand{\knaps}[0]{{\textsc{$d$-Knapsack}}}
\newcommand{\knapsone}[0]{{\textsc{0/1 $d$-Knapsack}}}
\newcommand{\knapsunb}[0]{{\textsc{Unbounded $d$-Knapsack}}}
\newcommand{\knapsbound}[0]{{\textsc{Bounded $d$-Knapsack}}}
\newcommand{\knapsoneeq}[0]{{\textsc{Bounded $d$-EqualityKnapsack}}}

\newcommand{\maxconvvar}[1]{{\textsc{$#1$-MaxConv}}}

\newcommand{\nullvec}[0]{{\constvec{0}}}
\newcommand{\constvec}[1]{\vv{\mathbf{#1}}}

\def\polylog{\operatorname{polylog}}

\DeclareMathOperator\poly{poly}

\DeclareDocumentCommand\tran{}{\ensuremath{^{\mathsf{T}}}}

\newcommand{\bZ}{\mathbb{Z}}
\newcommand{\bN}{\mathbb{N}}

\newcommand{\cO}{\mathcal{O}}
\renewcommand{\epsilon}{\varepsilon}

\DeclarePairedDelimiter\abs\lvert\rvert
\DeclarePairedDelimiter\norm\lVert\rVert
\DeclarePairedDelimiter\floor\lfloor\rfloor

\DeclarePairedDelimiter\paren{(}{)}

\DeclarePairedDelimiterXPP\bo[1]{\cO}{(}{)}{}{#1}

\DeclarePairedDelimiterX\setImpl[1]{\{}{\}}{#1}
\NewDocumentCommand\set{sO{}m>{\TrimSpaces}o}{
  \IfValueTF{#4}{
    \setImpl[#2]{#3 : \IfBooleanTF{#1}{\text{#4}}{#4}}
  }{
    \setImpl[#2]{#3}
  }
}

\allowdisplaybreaks

\begin{document}

\maketitle

\begin{abstract}
  In the Knapsack problem, one is given the task of packing a knapsack of a
  given size with items in order to gain a packing with a high profit value. As
  one of the most classical problems in computer science, research for this
  problem has gone a long way. One important connection to the
  $(\max,+)$-convolution problem has been established, where knapsack solutions
  can be combined by building the convolution of two sequences. This observation
  has been used in recent years to give conditional lower bounds but also
  parameterized algorithms.

  In this paper we carry these results into higher dimensions. We consider
  Knapsack where items are characterized by multiple properties -- given through
  a vector -- and a knapsack that has a capacity vector. The packing must not
  exceed any of the given capacity constraints. In order to show a similar
  sub-quadratic lower bound we consider a multidimensional version of
  \((\max, +)\)-convolution.  We then consider variants of this problem introduced by
  Cygan et al. and prove that they are all equivalent in terms of algorithms
  that allow for a running time sub-quadratic in the number of entries of the
  array.

  We further develop a parameterized algorithm to solve higher dimensional
  Knapsack. The techniques we apply are inspired by an algorithm introduced by
  Axiotis and Tzamos.  We will show that even for higher dimensional Knapsack,
  we can reduce the problem to convolution on one-dimensional, concave
  sequences, leading to an
  $\mathcal{O}(dn + dD \cdot \max\{\Pi_{i=1}^d{t_i}, t_{\max}\log t_{\max}\})$ algorithm, where
  $D$ is the number of different weight vectors, $t$ the capacity vector and $d$
  is the dimension of the problem.  Then, we use the techniques to improve the
  approach of Eisenbrand and Weismantel to obtain an algorithm for Integer
  Linear Programming with upper bounds with running time
  \(\bo{dn} + D \cdot \mathcal{O}(d \Delta)^{d(d+1)} + T_{\mathrm{LP}}\).

  Finally, we give an divide-and-conquer algorithm for ILP with running time
  \(n^{d+1} \cdot \mathcal O(\Delta)^d \cdot \log(\norm{u - \ell}_{\infty})\).
\end{abstract}
\section{Introduction}\label{sec:intro}
The Knapsack problem is one of the core problems in computer science. The task
of finding a collection of items that fits into a knapsack but also maximizes
some profit is NP-hard and as such the aim is to find approximation algorithms,
parameterized algorithms, and determine lower bounds for the running time of
exact algorithms.

Cygan et al.~\cite{cygan} and Künnemann et al.~\cite{Kunnemann} among others
have used the following relationship between Knapsack and the
$(\max,+)$-convolution problem. Assume we are given two disjoint item sets $A$
and $B$ and a knapsack of size $t$. If we additionally know the optimal profits
for all knapsack sizes $t' \le t$ we then can calculate the maximum profits for
all sizes $t'$ for $A\cup B$ by using convolution. This connection was used in
order to show that the existence of a sub-quadratic (in terms of capacity)
algorithm for Knapsack is equivalent to the existence of a sub-quadratic
algorithm for $(\max,+)$-convolution.

In this work, we consider these problems in higher dimensions. The Knapsack
problem is simply generalized by replacing the single value weight and capacity
by vectors. We then look for a collection of items whose summed up vectors do
not exceed the capacity vector in any position. The natural question arises
whether a similar quadratic time lower bound exists for this problem. We answer
this question positively by giving a generalization of convolution in higher
dimensions as well. Using this generalization and techniques introduced by
Bringmann \cite{bringmannSubsetSum} such as color-coding we are able to achieve
similar subquadratic lower bounds as in the one-dimensional case.

\subsection{Problem Definitions and Notations}
We define $[k] \coloneqq \set{i\in \mathbb{N}}[1 \le i \le k]$,
\([k]_0\coloneqq [k]\cup\set{0}\), and
\([a, b]\coloneqq \set{i\in\bZ}[a \leq i \leq b]\) for \(k, a, b\in\bZ\). In the following, we write
for two vectors $v,u\in \mathbb{R}^d$ that $v\le u$ (resp. $v<u$) if for all
$i \in [d]$ we have that $v_i\le u_i $ (resp. $v_i<u_i$). Further we denote with
$v_{\max} = \max_{i \in [d]}{v_i}$ for any vector $v\in \mathbb{R}^d$.  With
$\constvec{k} \in \mathbb{R}^d$ we denote the vector that has $k\in\mathbb{R}$ in every position.
\begin{definition}\label{def:array}
  Let $L \in \mathbb{N}^d$ be a $d$-dimensional vector and
  $A=(A_{i_1 i_2 \cdots i_d})$ be a $L_1\times L_2 \times \cdots \times L_d$ \emph{array}. We call the
  vector $L$ the \emph{size} of $A$ and denote the number of entries in $A$ with
  $\Pi(L) \coloneqq \prod_{i=1}^{d}{L_i}$.

  We call a vector $v\in \mathbb{Z}$ with $\nullvec{} \le v \le L-\constvec{1}$
  \textit{position} of array $A$. For ease of notation, we will denote for any
  array position $v$ of $A$ the respective array entry
  $A_{v_1 v_2 \cdots v_d}$ with $A_v$.
\end{definition}
We note that in this definition, array positions lie in between $\nullvec{}$
and $L - \constvec{1}$. This makes it easier to work with positions for
convolution and also for formulating time complexity bounds, as $\Pi(L)$ will be
the main parameter we consider.
\begin{definition}[Maximum Convolution]
  Given two $d$-dimensional arrays $A,B$ with equal size $L$, the
  $(\max,+)$-convolution of $A$ and $B$ denoted as $A\oplus B$ is defined as an array
  $C$ of size $L$ with $C_v \coloneqq \max_{u \le v}{A_u + B_{v-u}}$ for any $v< L$.
\end{definition}
\begin{problem}[framed]{\maxconv{}}
  Input: & Two $d$-dimensional arrays $A,B$ with equal size $L$. \\
  Problem: & Compute the array  $C \coloneqq A \oplus B$ of size $L$.
\end{problem}
\noindent Note that in the following we will refer to this problem as
``Convolution''. We specifically limit ourselves to the special truncated case
where both input arrays and the output array have the same size.  In a more
general setting, we could allow arrays of sizes $L^{(1)},L^{(2)}$ as input and
compute an array of size $L^{(1)}+L^{(2)}- \constvec{1}$.

To measure the running time of our algorithms, we mainly consider the size or
rather the number of entries from the resulting array because we need to
calculate a value for every position. Therefore, in terms of theoretical
performance, working with different sizes or calculating an array of combined
size will not make a difference. By only considering arrays of the same size,
we avoid many unnecessary cases and the dummy values of $-\infty$.

There is a quadratic time algorithm for \maxconv{} like in the one-dimensional
case. This algorithm simply iterates through all pairs of positions in time
$\mathcal{O}(\Pi(L)^{2})\subseteq \mathcal{O}(L_{\max}^{2d})$.

Further we define an upper bound test for the convolution problem. In this
problem, we are given a third input array and need to decide whether its
entries are upper bounds for the entries of the convolution.

\begin{problem}[framed]{\maxconvup{}}
  Input: & Three $d$-dimensional arrays $A,B,C$ with equal size $L$. \\
  Problem: & Decide whether $(A \oplus B)_v \le C_v$ for all $v < L$.
\end{problem}
\noindent We further generalize the notion of superadditivity to
multidimensional arrays.
\begin{problem}[framed]{\supadd{}}
  Input: & One $d$-dimensional array $A$ of size $L$. \\
  Problem: & Decide whether $A$ is superadditive, i.e., \(A_v\geq (A\oplus A)_v\) for
  all \(v < L\).
\end{problem}
\noindent The next problem class we consider is the \knaps{} problem. In the
one-dimensional case, one is given a set of items with weights and profits and
one knapsack with a certain weight capacity. The goal is now to pack the
knapsack such that the profit is maximized and the total weight of packed items
does not exceed the weight capacity.

The natural higher dimensional generalization arises when we have more
constraints to fulfill.  When going on a journey by plane, one for example has
several further requirements such as a maximum amount of allowed liquid or
number of suitcases. By imposing more similar requirements, we can simply
identify each item by a vector and also define the knapsack by a capacity
vector. This leads to the following generalization of Knapsack into higher
dimensions. We further differentiate between two problems \knapsone{} and
\knapsunb{}, depending on whether we allow items to be only used one time or an
arbitrary number of times.

\begin{problem}[framed]{\knapsone{}}
  Input: & Set $I$ of $n$ items each defined by a profit $p_i\in \mathbb{R}_{\ge 0}$ and weight vector $w^{(i)} \in \mathbb{N}^d$, along with a knapsack of capacity $t\in\mathbb{N}^d$. \\
  Problem: & Find subset $S\subseteq I$ such that $\sum_{i\in S}{w^{(i)}} \le t$ and $\sum_{i\in S}{p_i}$ is maximal.
\end{problem}

\begin{problem}[framed]{\knapsunb{}}
  Input: & Set $I$ of $n$ items each defined by a profit $p_i\in \mathbb{R}_{\ge 0}$ and weight vector $w^{(i)} \in \mathbb{N}^d$, along with a knapsack of capacity $t\in\mathbb{N}^d$. \\
  Problem: & Find a multi-set $S\subseteq I$ such that $\sum_{i\in S}{w^{(i)}} \le t$ and $\sum_{i\in S}{p_i}$ is maximal.
\end{problem}
\noindent The running time for these problems is mainly dependent on the dimension $d$, the number of items $n$ of an instance and the number of feasible capacities $\Pi(t+ \constvec{1})$ and we will further study the connection of these in regards to the convolution problems.

\subsection{Related Work}\label{hyp:maxconv}
Cygan et al.~\cite{cygan} as well as Künnemann et al.~\cite{Kunnemann} initiated the research and were the first to introduce this class of problems and convolution hardness. In particular, Cygan et al. showed for $d=1$ that all these problems are equivalent in terms of whether they allow for a subquadratic algorithm. This allows to formulate a conditional lower bound for all these problems under the hypothesis that no subquadratic algorithm for $\maxconv{}$ exists.

\begin{conjecture}[MaxConv-hypothesis]
There exists no $\mathcal O(\Pi(L)^{2-\epsilon})$ time algorithm for any $\epsilon >0$ for \maxconv{} with $d=1$.
\end{conjecture}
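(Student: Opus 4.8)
The final statement is a \emph{conjecture} (the MaxConv-hypothesis), not a theorem, so there is in a strict sense nothing to prove: it posits an unconditional quadratic lower bound for one-dimensional \maxconv{}, and a genuine proof of such a bound would require separating complexity classes in a way that lies far beyond current techniques. An unconditional argument that \maxconv{} with $d=1$ admits no $\mathcal{O}(\Pi(L)^{2-\epsilon})$ algorithm would in particular entail super-linear time lower bounds of a kind that no one presently knows how to establish. The plan, therefore, is not to prove the statement but to adopt it as a working hypothesis, exactly as is standard in fine-grained complexity alongside SETH and the 3SUM and APSP conjectures.

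What can and should be done instead is to motivate the hypothesis. First I would recall the algorithmic evidence: despite sustained attention, the best known algorithm for one-dimensional $(\max,+)$-convolution runs in time $n^2 / 2^{\Omega(\sqrt{\log n})}$ (through the work of Bremner et al.\ and Williams on the closely related $(\min,+)$ product), shaving only a subpolynomial factor off the trivial quadratic bound; no truly subquadratic $\mathcal{O}(n^{2-\epsilon})$ algorithm is known. Second, I would point to the reductions that tie \maxconv{} to the other central hard problems of fine-grained complexity, so that a subquadratic algorithm would have surprising consequences elsewhere — this web of equivalences is precisely what Cygan et al.~\cite{cygan} and Künnemann et al.~\cite{Kunnemann} established in the one-dimensional setting.

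The role the conjecture plays in this paper is as an \emph{anchor} for conditional lower bounds, and this is where the substantive work of the following sections lies. The goal is to show that the $d$-dimensional problems (\maxconv{}, \maxconvup{}, and \supadd{}, together with the knapsack variants) are equivalent, up to subquadratic-preserving reductions, to the one-dimensional base problem; granting the hypothesis for $d=1$ then propagates a quadratic lower bound to all of them. The main obstacle accordingly is not the conjecture itself but constructing those reductions: flattening a $d$-dimensional array of size $L$ into a one-dimensional instance without inflating the entry count beyond $\Pi(L)^{1+o(1)}$, and reproducing the dominance constraint $u \le v$ of the convolution across all coordinates simultaneously, which is exactly where the color-coding machinery of Bringmann~\cite{bringmannSubsetSum} is brought to bear.
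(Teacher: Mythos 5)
You are correct: the statement is a conjecture (the analogue of the SETH/3SUM/APSP-style hypotheses), and the paper offers no proof of it either, only the same motivation you give — the $n^2/2^{\Omega(\sqrt{\log n})}$ state of the art via Bremner et al.\ and Williams, and its role as the anchor for the conditional lower bounds established through the equivalence ring of \Cref{theo:cycle}. One minor correction to your framing: the paper propagates hardness from $d=1$ to higher $d$ by the trivial embedding (padding with dimensions of size one), not via flattening or color-coding — the flattening argument of \Cref{prop:calc-conv-helix} goes in the opposite direction (using $1$-dimensional algorithms to solve $d$-dimensional instances), and color-coding appears only in the \knapsone{} $\rightarrow$ \maxconv{} reduction.
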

\subparagraph*{\((\max, +)\)-Convolution} The best known algorithm to solve
convolution on sequences without any further assumption takes time
$n^2/2^{\Omega(\sqrt{\log n})}$. This result was achieved by Williams
\cite{williams}, who gave an algorithm for APSP in conjunction with a reduction
by Bremner et al.~\cite{bremner}. However, the existence of a truly subquadratic
algorithm remains open.

Research therefore has taken a focus on special cases of convolution where one or both input sequences is required to have certain structural properties such as monotony, concavity or linearity. Chan and Lewenstein \cite{ChanMono} gave a subquadratic $\mathcal{O}(n^{1.864})$ algorithm for instances where both sequences are monotone increasing and the values are bound by $\mathcal{O}(n)$. Chi et al.~\cite{Chi} improved this further with a randomized $\widetilde{\mathcal{O}}(n^{1.5})$ algorithm.

Axiotis and Tzamos~\cite{axiotis19} showed that Convolution with only one
concave sequence can be solved in linear time.  Gribanov et al.~\cite{gribanov}
studied multiple cases and gave subquadratic algorithms when one sequence is
convex, piece-wise linear or polynomial.

\subparagraph*{Knapsack} Convolution has proven to be a useful tool to
solve other problems as well, in particular the Knapsack problem. In fact one of
the reductions of Cygan et al.~\cite{cygan} was an adapted version of
Bringmann's algorithm for subset sum \cite{bringmannSubsetSum}. Bringmann's
algorithm works by constructing sub-instances, solving these and then combining
the solutions via Fast Fourier Transformation (FFT). The algorithm by Cygan et
al. to solve Knapsack works the same way, but uses Convolution instead of FFT.

Axiotis and Tzamos follow a similar approach but choose their sub-instances more carefully, by grouping items with the same weight. That way, the solutions make up concave profit sequences which can be combined in linear time \cite{axiotis19}. This yields in total an $\mathcal O (n+Dt)$ algorithm, where $D$ is the number of different item weights.

Polak et al.~\cite{polak21} gave an
$\mathcal O (n+ \min\{w_{\max},p_{\max}\}^3)$ algorithm, where
$w_{\max},p_{\max}$ denote the maximum weight and profit respectively. They
achieved this by combining the techniques from Axiotis and Tzamos with proximity
results from Eisenbrand and Weismantel \cite{eisenbrand}. Further, Chen et
al.~\cite{chen2023faster} improved this to a time of
$\widetilde{\mathcal{O}}(n+w_{\max}^{2.4})$.  Recently, Ce Jin \cite{Jin23} gave an
improved $\widetilde{\mathcal{O}}(n+w_{\max}^{2})$ algorithm. Independently to these
results for 0/1 Knapsack, Bringmann gave an
$\widetilde{\mathcal{O}}(n+w_{\max}^{2})$ algorithm for Bounded Knapsack.

Doron-Arad et al.~\cite{ddimknapsacklb} showed that there are constants $\zeta, d_0
> 0$, such that for every integer \(d > d_0\) there is no algorithm that solves
\knapsone{} in time \(\bo{\paren{n + t_{\max}}^{\zeta \frac{d}{\log d}}}\).

\subparagraph*{Integer Linear Program} \(d\)-dimensional Knapsack problems are a special case of
integer linear programs.
\begin{problem}[framed]{Integer Linear Program (ILP)}
  Input: & A matrix $A\in \mathbb{Z}^{d\times n}$, a target vector $b\in \mathbb{Z}^d$, a profit vector $c\in \mathbb{Z}^n$ and an upper bound vector $u\in \mathbb{N}^d$.\\
  Problem: & Find $x\in \mathbb{Z}^n$ with $Ax = b$, $0 \le x \le u$ and such that $c\tran{}x$ is maximal.
\end{problem}
\noindent Lower bounds \(\ell\in \bN^d\) may be present but can easily be removed by
changing the right side to \(b - A\ell\) and the upper bound to \(u - \ell\).  If we
limit $A$ to be a matrix with only non-negative entries and
$u \coloneqq \constvec{1}$ then solving the above defined ILP is equivalent to
$\knapsone{}$. If we omit the $x\le u$ constraint, the resulting problem is
$\knapsunb{}$.

A very important part in research of ILPs has been on \textit{proximity}. The
proximity of an ILP is the distance between an optimal solution and an optimal
solution of its relaxation.  The relaxation of an ILP is given by allowing the
solution vector $x$ to be fractional.

Eisenbrand and Weismantel~\cite{eisenbrand} have proven that for an optimal
solution of the LP-relaxation $x^*\in \mathbb{R}^n$ there is an optimal integral solution
$z^* \in \mathbb{Z}^n$ with $||x^* -z^*||_1 \le d(2d\Delta+1)^d$ with $\Delta$ being the largest
absolute entry in $A$. They used this result to give an algorithm that solves
ILPs without upper bounds in time
$(d\Delta)^{\mathcal{O}(m)} \cdot \norm{b}_{\infty}^2$ and ILPs as defined above in time
$\mathcal{O}(n\cdot \mathcal{O}(d)^{(d+1)^2} \cdot \mathcal{O}(\Delta)^{d(d+1)} \cdot \log^2{(d\Delta)})$.  They additionally
extended their result to Bounded Knapsack and obtained the running time
$\mathcal{O}(n^2\cdot \Delta^2)$.

There have been further results on proximity such as Lee et
al.~\cite{proxspars}, who gave a proximity bound of
$3d^2\log{(2\sqrt{d}\cdot \Delta_m^{1/m})}\cdot \Delta_m $ where $\Delta_m$ is the largest absolute
value of a minor of order $m$ of matrix $A$. Celaya et al.~\cite{proxcel} also
gave proximity bounds for certain modular matrices.

Rohwedder and W\k{e}grzycki~\cite{larsdistinctalg} conjecture that there is no
\(2^{\bo{m^{2-\epsilon}}}\poly(n)\) time algorithm for ILP with
\(\Delta=\bo{1}\).  They also show that there are several problems that are
equivalent with respect to this conjecture.

\subsection{Our Results}

We begin by expanding the results of Cygan et al.~\cite{cygan} into higher
dimensions. The natural question is whether similar relations shown in their
work also exist in higher dimensions and in fact they do. In the first part of
this paper we show that the same equivalence -- regarding existing
subquadratic time algorithms -- holds among the higher dimensional problems.

\begin{theorem}\label{theo:cycle}
For any fixed $d$ and any $\epsilon > 0$, the following statements are equivalent:
\begin{enumerate}
    \item There exists an $\mathcal{O}(\Pi(L)^{2-\epsilon})$-time algorithm for \maxconv.
    \item There exists an $\mathcal{O}(\Pi(L)^{2-\epsilon})$-time algorithm for \maxconvup.
    \item There exists an $\mathcal{O}(\Pi(L)^{2-\epsilon})$-time algorithm for \supadd.
    \item There exists an $\mathcal{O}(n+\Pi(t)^{2-\epsilon})$-time algorithm for \knapsunb.
    \item There exists an $\mathcal{O}(n+\Pi(t)^{2-\epsilon})$-time algorithm for \knapsone.
\end{enumerate}
\end{theorem}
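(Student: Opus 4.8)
The plan is to establish all five equivalences through a single directed cycle of reductions $1\Rightarrow 2\Rightarrow 3\Rightarrow 4\Rightarrow 5\Rightarrow 1$, lifting the one-dimensional chain of Cygan et al.\ to dimension $d$. Each arrow maps an instance of one problem to an instance of the next in linear time while changing the governing parameter ($\Pi(L)$ for the convolution variants, $\Pi(t)$ for the knapsack variants) by at most a constant factor and the item count $n$ by at most a $\polylog$ factor; a subquadratic algorithm for the target then yields one for the source, and traversing the cycle propagates the bound to every problem. The structural reason the lift goes through is that $d$ is treated as a fixed constant: a reduction may append $\bo{1}$ auxiliary coordinates, each of capacity $1$ and hence contributing only a factor of $2$ to the number of entries, so that $\Pi$ is multiplied by a constant and the exponent $2-\epsilon$ survives.

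The two steps internal to the convolution world are immediate. For $1\Rightarrow 2$, given a \maxconvup{} instance $(A,B,C)$ I would call the assumed \maxconv{} algorithm on $(A,B)$ to produce $A\oplus B$ in time $\bo{\Pi(L)^{2-\epsilon}}$ and compare entrywise against $C$ in time $\bo{\Pi(L)}$. For $2\Rightarrow 3$, I observe that a \supadd{} instance $A$ is literally the \maxconvup{} instance $(A,A,A)$, since $A_v\ge (A\oplus A)_v$ for all $v$ is the same condition as $(A\oplus A)_v\le A_v$ for all $v$; so the oracle is invoked directly.

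The substance lies in the three steps crossing between the two worlds. For $5\Rightarrow 1$ I would encode a \maxconv{} instance as a \knapsone{} instance in dimension $d$ (plus two auxiliary coordinates): each position $u<L$ contributes an item of weight vector $u$ and profit $A_u+M$, and each position $u<L$ a second item of weight $u$ and profit $B_u+M$, where $M$ is a large offset that also makes all profits nonnegative; the two extra capacity-$1$ coordinates tag the two classes so that the optimum is forced to take exactly one $A$-item and one $B$-item. A capacity of $(v,1,1)$ then realises $\max_{u+u'\le v}(A_u+B_{u'})+2M$, which is the coordinatewise prefix maximum of the convolution rather than $C_v$. To extract the exact entry I would add to every item a linear bonus $\beta\sum_{i}w_i$ with $\beta$ larger than the total profit range: maximising the bonus pushes $\sum_i(u+u')_i$ to its maximum, and since $u+u'\le v$ with maximal coordinate sum forces $u+u'=v$ exactly, the capacity-$v$ optimum equals $C_v+\beta\sum_i v_i+2M$, from which $C_v$ is recovered. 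The two knapsack-internal arrows $3\Rightarrow 4$ and $4\Rightarrow 5$ lift the bounded/unbounded reductions of Cygan et al.: the unbounded optimum, viewed as the maximal superadditive array dominating the items, can be reconstructed entry-by-entry with a \supadd{} oracle, and a $0/1$ instance is reexpressed as an unbounded one through a coordinate encoding of item identities, with the auxiliary capacities kept small enough to preserve the exponent $2-\epsilon$.

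The hard part will be the dimension lift of precisely these last reductions, which are search-to-decision or optimisation-to-decision in nature: one must reconstruct an entire $d$-dimensional output array, or an optimal profit for every feasible capacity, from an oracle that returns only a single decision bit or a single optimum. In one dimension Cygan et al.\ control this with linear-size index gadgets, whereas here the gadgets have to be realised through $\bo{1}$ additional coordinates, and the delicate point is to verify that after each such construction both the capacity product $\Pi(t)$ and the number of items remain within $\Pi^{1+o(1)}$, so that the subquadratic exponent is not eroded; this bookkeeping, rather than any single clever gadget, is where I expect the real work to be.
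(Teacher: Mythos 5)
Your cycle runs in the opposite orientation to the one the paper proves, and this is not cosmetic: it is exactly what makes three of your five arrows break. The paper's ring is $5 \Rightarrow 4 \Rightarrow 3 \Rightarrow 2 \Rightarrow 1 \Rightarrow 5$ (binary encoding, the primal/dual item gadget, a block construction, parallel binary search via a conflict-finding oracle, and a Bringmann-style color-coding algorithm). You propose $1 \Rightarrow 2 \Rightarrow 3 \Rightarrow 4 \Rightarrow 5 \Rightarrow 1$. Your first two arrows are the trivial ones (compute-and-compare, and the instance $(A,A,A)$) and are correct, but the remaining three are precisely the directions for which no gadget is known, and your sketches do not survive a runtime check.

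Concretely, in your $5 \Rightarrow 1$ step, \knapsone{} as defined returns a single optimum for one given capacity, so your construction with capacity $(v,1,1)$ and the linear bonus must be invoked once per output position $v$. That is $\Pi(L)$ oracle calls, each on an instance with $n = 2\Pi(L)$ items, so even ignoring the $\Pi(t)^{2-\epsilon}$ term the total time is $\Omega(\Pi(L)^{2})$: the reduction does not preserve subquadraticity (it would only work if you redefined \knapsone{} to output optima for all capacities $v \le t$, which is not the problem the theorem speaks about). The paper dodges this by never reducing the function problem \maxconv{} to Knapsack at all; Knapsack appears only as the target of the \emph{decision} problem \supadd{} (a single call, via the primal/dual gadget), while the function-computation arrow goes the other way ($1 \Rightarrow 5$) through color-coding. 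Your $3 \Rightarrow 4$ step has the same defect amplified: entry-by-entry reconstruction of the unbounded-knapsack profit array from a yes/no superadditivity oracle needs at least $\Pi(t)$ calls of cost $\Pi(t)^{2-\epsilon}$ each, i.e. $\Pi(t)^{3-\epsilon}$, and it is moreover unclear how one decision bit per call pins down a profit value without a further binary-search factor. Finally, your $4 \Rightarrow 5$ step must prevent an unbounded solver from taking an item twice using only $\bo{1}$ extra coordinates (since $d$ is fixed); no concrete gadget is given, and with constantly many bounded-capacity coordinates one cannot separate the $n$ events ``item $i$ taken twice'' from all the feasible pairs ``items $i \neq j$ taken once each.'' The known reduction goes in the opposite direction (binary encoding, $5 \Rightarrow 4$), which is exactly why the paper orients its ring the way it does.
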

Some of these reduction incur a multiplicative factor of $\mathcal{O}(2^d)$. This is a
natural consequence due to the exponentially larger amount of entries that our
arrays hold and that we need to process. As an example, where it was
sufficient in the one-dimensional case to split a problem in two sub-problems, we
may now need to consider $2^d$ sub-problems. For this reason we require $d$ to
be fixed, so we can omit these factors.  We will prove this statement through a
ring of reductions. We note that one of the reductions uses an algorithm or a
generalization of it from Bringmann \cite{bringmannSubsetSum, cygan} that is
randomized. Part of this algorithm, involving so-called color-coding can be
derandomized, but a full derandomization is still an open problem.

We note that under the MaxConv-hypothesis, there does not exist an
$\mathcal{O}(\Pi(L)^{2-\epsilon})$-time algorithm for \maxconvvar{1}. If there exists any
$d$ such that \maxconv{} admits a sub-quadratic algorithm, then it would also
solve the problem in sub-quadratic time for any $d' \le d$ and especially
\(d' = 1\), hence contradicting the MaxConv-hypothesis, because we can extend
any $d'$-dimensional array to a $d$-dimensional one by adding dimensions with
size one.

We also discuss how to use lower order improvements for 1-dimensional
convolution for the \(d\)-dimensional convolution via a standard argument using
linearization of the vector and adding appropriate padding.  As this is a
standard trick we omit the proof in the main body but we state it in
\Cref{sec:calc-d-conv-helix} for completeness.
\begin{lemma}[restate=calcconvhelix,name=]\label{prop:calc-conv-helix}
  Given an algorithm for 1-dimensional convolution with running time \(T(n)\)
  and two \(d\)-dimensional arrays \(A, B\) with size \(L\), we can calculate
  \(A\oplus B\) in time \(T(2^d\prod(L))\).
\end{lemma}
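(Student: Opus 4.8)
The plan is to reduce the $d$-dimensional convolution to a single $1$-dimensional convolution by linearizing the arrays with a mixed-radix (``helix'') encoding, after padding every dimension by a factor of two so that distinct coordinates cannot interfere. Concretely, set $M_i := 2L_i$ for each $i\in[d]$ and let $N := \prod_{i=1}^d M_i = 2^d\,\Pi(L)$. Define the index map
\[
  \phi(x) := \sum_{i=1}^d x_i \prod_{j=1}^{i-1} M_j ,
\]
which is a bijection from the box $[0,M_1-1]\times\dots\times[0,M_d-1]$ onto $[0,N-1]$. First I would establish the one property everything rests on: whenever $x_i + y_i \le M_i - 1$ for all $i$, the sum $x+y$ overflows no coordinate, so $\phi(x) + \phi(y) = \phi(x+y)$ by linearity of the defining sum.

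With this encoding I would build two $1$-dimensional arrays $\flatten{A}, \flatten{B}$ of length $N$, placing $\flatten{A}_{\phi(u)} := A_u$ at the image of every position $u < L$ and $\flatten{A}_k := -\infty$ at all other indices, and likewise for $B$. Running the given algorithm produces $\flatten{C} := \flatten{A}\oplus\flatten{B}$ in time $T(N) = T(2^d\,\Pi(L))$, after which I would read off $C_v := \flatten{C}_{\phi(v)}$ for every position $v < L$. All indices that occur satisfy $\phi(\cdot) \le \phi(L-\constvec{1}) < N$, so the truncated $1$-dimensional convolution on length-$N$ arrays is enough; constructing $\flatten{A}, \flatten{B}$ and extracting $C$ costs only $\bo{2^d\,\Pi(L)}$, which is dominated by $T(N)$.

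The correctness is where the padding earns its keep. For a fixed $v < L$ we have $\flatten{C}_{\phi(v)} = \max_{k+\ell=\phi(v)} \flatten{A}_k + \flatten{B}_\ell$, and only pairs with $k = \phi(u)$, $\ell = \phi(w)$ for positions $u,w < L$ can contribute a finite value. For any such pair the coordinatewise bound $u_i + w_i \le 2(L_i-1) < M_i$ holds, so the no-overflow property gives $\phi(u)+\phi(w) = \phi(u+w)$; together with $\phi(u)+\phi(w) = \phi(v)$ and the injectivity of $\phi$ this forces $u+w = v$, i.e. $w = v-u$ with $u \le v$. Conversely, each $u \le v$ yields exactly such a contributing pair. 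Hence $\flatten{C}_{\phi(v)} = \max_{u\le v} A_u + B_{v-u} = C_v$, which is the desired array.

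I expect the only real obstacle to be justifying the no-overflow step: without the factor-two padding two different coordinate pairs can collapse to the same linear index (a wrap-around across a dimension boundary) and spoil the maximum, and it is precisely this that the $2^d$ blow-up prevents. A second, routine point is the handling of the $-\infty$ sentinels; should the $1$-dimensional routine require finite inputs, I would replace $-\infty$ by any value below $-2\max_{u}(\abs{A_u}+\abs{B_u})$, which can never win a maximum against a genuine entry and so leaves the extracted values unchanged.
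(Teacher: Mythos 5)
Your proposal is correct and follows essentially the same route as the paper: the paper's Kronecker-map construction (substituting $x_{k+1} \mapsto x^{D(1)\cdots D(k)}$ with $D(i)=2L_i-1$) is exactly your mixed-radix linearization $\phi$, with the factor-two padding per dimension serving the same anti-wraparound purpose and yielding the same $2^d\Pi(L)$ length bound. The only differences are cosmetic: you use radix $2L_i$ instead of $2L_i-1$ and prove the no-overflow and injectivity properties explicitly (and handle the $-\infty$ sentinels), whereas the paper phrases the construction as polynomial multiplication over the $(\max,+)$-semiring and delegates correctness to the cited Kronecker-substitution literature.
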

Together with the discussion above, \Cref{prop:calc-conv-helix} shows that
1-dimensional and \(d\)-dimensional convolution are equivalent for fixed \(d\)
or up to a factor \(2^{\bo{d}}\).

In the second part of our paper, we complement our conditional lower bound with
a parameterized algorithm. To achieve this, we generalize an algorithm by
Axiotis and Tzamos~\cite{axiotis19}. Our algorithm will also have a running time
dependent on the number of different item weights, that we denote with $D$ and
the largest item weight $\Delta \coloneqq \max_{i\in I}\norm{w^{(i)}}_\infty$.  This algorithm will
also group items by weight vector and solve the respective sub-instances. The
resulting partial solutions are then combined via $\maxconv{}$.  However,
solving general $\maxconv{}$ needs quadratic time in the number of entries. We can
overcome this barrier by reducing our problem to convolution on one-dimensional,
concave sequences.

\begin{theorem}[restate=boundedKnapsackAlg,name=]\label{theo:algo}
  There is an algorithm for $\knapsoneeq{}$ with running time
  $\mathcal{O}(dn + d\cdot D \cdot \max\{\prod(t + \constvec{1}), t_{\max}\log t_{\max}\}) \subseteq \mathcal{O}(dn +
  d\cdot \min\set{n, (\Delta +1)^d} \cdot \max\{\prod(t + \constvec{1}), t_{\max}\log t_{\max}\}
  ))$.
\end{theorem}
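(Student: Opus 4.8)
The plan is to lift the Axiotis--Tzamos strategy \cite{axiotis19} to $d$ dimensions by grouping items by weight vector, turning each group into a concave profit sequence, and combining the groups via $\maxconv{}$ realized as a family of one-dimensional concave convolutions. First I would read the input and bucket the $n$ items into the $D$ groups that share a common weight vector, which costs $\mathcal{O}(dn)$. Fix one group $g$ with common weight $w$ and (in the bounded setting) item multiplicities. Taking $k$ copies from this group contributes total weight $kw$, and by a standard exchange argument the best attainable profit $P_k$ is the sum of the $k$ largest available profits of the group, counted with multiplicity. Since these profits are added in non-increasing order, the marginal gains $P_k - P_{k-1}$ are non-increasing, so $(P_k)_k$ is concave. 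Only indices with $kw \le t$ are relevant, and for a weight vector with a positive coordinate this forces $k \le t_{\max}$ (the all-zero degenerate case is handled separately), so it suffices to select and sort the at most $t_{\max}$ largest contributing profits, which costs $\mathcal{O}(\abs{g} + t_{\max}\log t_{\max})$ with $\sum_g \abs{g} = n$.

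I would then maintain an array $C$ of size $t + \constvec{1}$ holding, for each position $s$, the best profit using weight \emph{exactly} $s$ among the groups processed so far; this equality formulation composes cleanly under $\oplus$, and the empty selection initializes $C_{\nullvec{}} = 0$ and all other entries to $-\infty$. Incorporating group $g$ is the convolution $C'_s = \max_{k:\, kw \le s} C_{s-kw} + P_k$. Because $P$ is supported only on the ray $\{kw\}$, this is geometrically a one-dimensional convolution carried out independently along each line of the lattice box $[\nullvec{}, t]$ in direction $w$. Concretely, I partition the box into the maximal segments $\{s_0, s_0 + w, s_0 + 2w, \dots\} \cap [\nullvec{}, t]$, where $s_0$ ranges over the base points (those with $s_0 - w \not\ge \nullvec{}$); since $w \ge \nullvec{}$, each such intersection is a contiguous segment and every lattice point has a unique base point, so these segments partition the box exactly. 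Reading off $a_j := C_{s_0 + jw}$ turns the restriction of $C'$ to a segment into precisely the one-dimensional $(\max,+)$-convolution of $(a_j)$ with the concave sequence $(P_k)$.

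Because $(P_k)$ is concave, each of these one-dimensional convolutions is solved in time linear in the segment length by the algorithm of Axiotis and Tzamos \cite{axiotis19} (equivalently, by the SMAWK algorithm on the induced totally monotone matrix with entries $a_{m-k} + P_k$). As the segments partition the box, the segment lengths sum to $\Pi(t+\constvec{1})$; copying a segment into a flat buffer and writing the result back costs $\mathcal{O}(d)$ per entry, so incorporating one group costs $\mathcal{O}(d\,\Pi(t+\constvec{1}))$ for the convolutions plus $\mathcal{O}(d(\abs{g} + t_{\max}\log t_{\max}))$ for building $(P_k)$, i.e. $\mathcal{O}(d(\abs{g} + \max\{\Pi(t+\constvec{1}),\, t_{\max}\log t_{\max}\}))$. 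Summing over the $D$ groups and using $\sum_g \abs{g} = n$ yields $\mathcal{O}(dn + dD\max\{\Pi(t+\constvec{1}),\, t_{\max}\log t_{\max}\})$, and the second bound follows from $D \le \min\{n, (\Delta+1)^d\}$, since each weight vector has $d$ coordinates in $\{0,\dots,\Delta\}$.

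The step I expect to be the main obstacle is the reduction of the $d$-dimensional convolution against the ray-supported array to a clean family of independent one-dimensional concave convolutions. I must verify carefully that the maximal lines in direction $w$ partition $[\nullvec{}, t]$ (each point on a unique segment), enumerate their base points with no overhead beyond $\mathcal{O}(d\,\Pi(t+\constvec{1}))$, and confirm that the totally monotone structure required by the linear-time concave convolution survives the restriction of $C$ to a line. Alongside this, I would justify the exchange argument giving concavity of $(P_k)$ in the bounded (multiplicity) setting, and argue that truncating each group to its top $\mathcal{O}(t_{\max})$ profits is without loss, since no feasible packing can use more than $t_{\max}$ copies of a single weight vector.
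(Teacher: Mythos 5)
Your proposal is correct and follows essentially the same route as the paper's proof: grouping items by weight vector, forming the concave profit sequences $f_{w}(k)$ (your $P_k$) via selection of the top $t_{\max}$ profits, decomposing the box $[\nullvec{}, t]$ into the maximal segments (equivalence classes) in direction $w$, and applying the linear-time concave $(\max,+)$-convolution of Axiotis and Tzamos per segment, with the same accounting yielding $\mathcal{O}(dn + dD\max\{\Pi(t+\constvec{1}),\, t_{\max}\log t_{\max}\})$ and the bound $D \le \min\{n, (\Delta+1)^d\}$. The points you flag for verification (segments partitioning the box, base-point enumeration at $\mathcal{O}(d)$ overhead per entry, concavity under bounded multiplicities) are exactly the ones the paper addresses, and in the same way.
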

In the general case our algorithm will achieve a running time of
$\mathcal{O}(dn + D \cdot \Pi(t)) $ as the $t_{\max}\log t_{\max}$ part only becomes relevant
when we have a slim knapsack, that is very large in one component, but
comparatively small in the other.
We note that our algorithm achieves the lower bound proposed
in \Cref{theo:cycle} since $D$ is also upper bounded by $\Pi(t)$.

We use the techniques for \Cref{theo:algo} to improve the approach of Eisenbrand
and Weismantel~\cite{eisenbrand} to obtain the following running time for ILP.
\begin{theorem}[restate=ilpaxiotis,name=]\label{prop:ilp-axiotis}
  There is an algorithm for ILP with running time
  \(\bo{dn} + D \cdot \mathcal{O}(d \Delta)^{d(d+1)} + T_{\mathrm{LP}}\).
\end{theorem}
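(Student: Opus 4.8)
The plan is to combine the Eisenbrand--Weismantel proximity theorem with \Cref{theo:algo} exactly as Axiotis and Tzamos combine proximity with fast knapsack in one dimension: first solve the LP relaxation to localize an optimal integral solution, and then compute the optimal \emph{correction} by a single call to the algorithm for \knapsoneeq{}. First I would solve the relaxation in time \(T_{\mathrm{LP}}\) and take a vertex optimum \(x^{*}\), which has at most \(d\) fractional coordinates since it is determined by the \(d\) equalities \(Ax=b\) together with \(n-d\) tight box constraints. Setting \(\hat x \coloneqq \lfloor x^{*}\rfloor\) gives a point with \(\nullvec{}\le \hat x\le u\), and I would search for the optimal integral \(z^{*}\) in the form \(z^{*}=\hat x + y\). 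The reduced instance asks for \(y\in\bZ^{n}\) with \(Ay = b - A\hat x =: b'\), with \(-\hat x\le y\le u-\hat x\), maximizing \(c\tran y\). Two facts drive everything: since \(Ax^{*}=b\) we have \(b' = A(x^{*}-\hat x)\) with \(\norm{x^{*}-\hat x}_{1}<d\), so \(\norm{b'}_{\infty}<d\Delta\); and by the proximity bound \(\norm{x^{*}-z^{*}}_{1}\le d(2d\Delta+1)^{d}\) \cite{eisenbrand}, the optimal correction satisfies \(\norm{y}_{1}\le d(2d\Delta+1)^{d}+d =: P = \mathcal{O}(d\Delta)^{d}\). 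Thus I may additionally impose \(\abs{y_{i}}\le P\) on every coordinate without losing the optimum.

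Next I would cast this signed, bounded correction problem as an instance of \knapsoneeq{}, whose weights must be non-negative. I would decompose each admissible change of \(x_{i}\) into unit moves: an increase contributes the column \(a^{(i)}\) and profit \(c_{i}\), a decrease contributes \(-a^{(i)}\) and profit \(-c_{i}\), with the available multiplicities capped by the box constraints and by \(P\). To obtain non-negative weight vectors I would add the shift \(\Delta\constvec{1}\) to each move, so every weight lies in \([\nullvec{},\constvec{2\Delta}]\), and append one auxiliary coordinate in which every move has weight \(1\). Because items sharing a column of \(A\) become identical after this transformation, the number of distinct weight vectors is at most \(2D\), and the (trivially concave, in fact linear) per-group profit sequences are exactly of the form required by \Cref{theo:algo}. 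A choice of moves reaching the shifted target \((b'+\Delta s\,\constvec{1},\,s)\) corresponds precisely to a correction \(y\) with \(Ay=b'\) that uses \(s\) moves, so the auxiliary coordinate records the move count and lets me recover, from the resulting \knapsoneeq{} table, the optimum by scanning the \(\mathcal{O}(P)\) admissible targets and returning \(c\tran\hat x\) plus the best stored profit.

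The step I expect to be the crux is bounding the number of array positions that the \knapsoneeq{} call must range over, so that \(\Pi(t+\constvec{1}) = \mathcal{O}(d\Delta)^{d(d+1)}\). Here I would invoke the Steinitz lemma as in \cite{eisenbrand}: the unit moves realizing the optimal \(y\) can be reordered so that every prefix sum stays within \(\mathcal{O}(d\Delta)\) of the segment from \(\nullvec{}\) to \(b'\); since \(\norm{b'}_{\infty}<d\Delta\) this confines all intermediate true partial sums to a box of side \(\mathcal{O}(d\Delta)\). After the shift, a coordinate of a prefix of length \(k\le P\) grows by at most \(\Delta k\), so each of the first \(d\) shifted coordinates stays below \(\mathcal{O}(\Delta P)=\mathcal{O}(d\Delta)^{d+1}\); restricting the \knapsoneeq{} array to this box loses nothing by the Steinitz reordering, and yields \(\Pi(t+\constvec{1})=\mathcal{O}(d\Delta)^{d(d+1)}\). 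With \(t_{\max}=\mathcal{O}(d\Delta)^{d+1}\) the term \(t_{\max}\log t_{\max}\) is dominated by \(\Pi(t+\constvec{1})\), so \Cref{theo:algo} runs in time \(\mathcal{O}(dn)+D\cdot\mathcal{O}(d\Delta)^{d(d+1)}\); adding the cost \(T_{\mathrm{LP}}\) of the relaxation and the \(\bo{dn}\) preprocessing to group columns and form \(\hat x\) gives the claimed bound. The delicate part is precisely this state-space tally: making the weights non-negative inflates the per-coordinate range from \(\mathcal{O}(d\Delta)\) to \(\mathcal{O}(d\Delta)^{d+1}\), and one must verify that the Steinitz reordering together with the proximity bound keeps every relevant position inside the asserted box while leaving the optimum reachable.
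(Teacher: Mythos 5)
Your first half follows the paper exactly: solve the LP relaxation, apply the Eisenbrand--Weismantel proximity bound, pass to the residual problem around \(\floor{x^{*}}\) with \(\norm{y}_1 \le L = \mathcal{O}(d)\cdot\mathcal{O}(d\Delta)^d\), and group variables sharing a column so that each group contributes a concave profit sequence. The gap is in the second half, where you reduce to \knapsoneeq{} as a black box by shifting every unit move's weight by \(\Delta\constvec{1}\) and appending a move-counter coordinate. That reduction is sound for correctness (the counter is indeed needed, since otherwise a solution hitting the shifted target \(b' + \Delta s\constvec{1}\) with \(s'\neq s\) moves would have true sum \(\neq b'\)), but it is too expensive: the resulting capacity vector has \(d\) coordinates of size \(\mathcal{O}(\Delta L) = \mathcal{O}(d\Delta)^{d+1}\) \emph{plus} a counter coordinate of size \(L = \mathcal{O}(d\Delta)^{d}\) (up to a factor \(d\)), so \(\Pi(t+\constvec{1}) = \mathcal{O}(d\Delta)^{d(d+1)}\cdot\mathcal{O}(d\Delta)^{d} = \mathcal{O}(d\Delta)^{d(d+2)}\). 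Invoking \Cref{theo:algo} on this instance therefore yields \(D\cdot\mathcal{O}(d\Delta)^{d(d+2)}\), which misses the claimed exponent \(d(d+1)\) by a factor of roughly \(\mathcal{O}(d\Delta)^{d}\); it is barely better than Eisenbrand--Weismantel's \((d+1)^2\) exponent, which is precisely what the theorem is supposed to improve. Your Steinitz invocation also cannot tighten this: the dynamic program of \Cref{theo:algo} processes weight classes in a fixed order, so the intermediate states are sums of \emph{whole} groups and cannot be reordered as Steinitz requires; but this is harmless, since the trivial bound \(\norm{Ay'}_\infty \le \Delta\norm{y}_1 \le \Delta L\) gives the same box. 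The fatal cost is the extra counter dimension, which is an unavoidable overhead of any shift-to-nonnegative reduction.

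The paper avoids this overhead by \emph{not} reducing to nonnegative knapsack. Instead it reruns the proof technique of \Cref{theo:algo} directly on signed positions: the DP arrays \(U^i\) are indexed by all \(v\in\bZ^d\) with \(\norm{v}_{\infty}\le \Delta L\) (so \((2\Delta L + 1)^d = \mathcal{O}(d\Delta)^{d(d+1)}\) entries, no counter needed), the equivalence classes run along \(\pm A'_i\) through this signed box, and each class is handled by one concave convolution via \Cref{prop:linear-concave-conv} after padding one sequence with infinities. A further linearization-with-buffer-zones trick is used to traverse the classes without the extra factor \(d\) that a naive multidimensional indexing would incur. So to repair your proof you would need to replace the black-box call to \knapsoneeq{} by this direct adaptation of its internals to arrays with negative indices.
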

Our approach reduces the dependency on the number of dimensions
(\(\mathcal{O}(d)^{d(d+1)}\) instead of \(\mathcal{O}(d)^{(d+1)^{2}}\)) and removes the logarithmic
factor \(\log^2(d\Delta)\).  Further, our algorithm is mostly independent on $n$ but
relies on the number of different columns of the given ILP.

In addition to their conjecture Rohwedder and W\k{e}grzycki~\cite{larsdistinctalg}
showed that the quadratic dependence on the number of constraints in the
exponent can be avoided if the term \(n^d\) is allowed in the runtime.  We
improve their algorithm by removing \(d\) from the base.
\begin{theorem}[restate=divConqAlg,name=]\label{prop:div-conq-alg}
  There is an algorithm that can solve ILP in time
  \(n^{d+1} \cdot \mathcal O(\Delta)^d \cdot \log(\norm{u - \ell}_{\infty})\).
\end{theorem}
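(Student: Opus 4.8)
The plan is to combine the Eisenbrand--Weismantel proximity bound with a divide-and-conquer recursion that halves the variable range, so that the recursion depth accounts for the factor $\log(\norm{u-\ell}_\infty)$. First I would shift variables so that the instance reads $Ax = b$, $\nullvec \le x \le u-\ell$, and solve the LP relaxation once to obtain a vertex optimum $x^*$ with at most $d$ fractional coordinates. By proximity there is an optimal integral $z^*$ with $\norm{x^*-z^*}_1$ bounded; the refined proximity argument developed for \Cref{prop:ilp-axiotis} yields a bound of order $\bo{\Delta}^d$, which is precisely what removes the factor $d$ from the base of the final running time.

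The recursion writes each variable in binary, $x_j = r_j + 2 x'_j$ with $r_j \in \set{0,1}$ and $\nullvec \le x' \le \ceil{(u-\ell)/2}$. Substituting into $Ax=b$ gives $A r + 2 A x' = b$, so a choice of the least significant bit vector $r \in \set{0,1}^n$ forces the subtarget $b' \coloneqq (b - Ar)/2$, provided $b \equiv Ar \pmod 2$ coordinatewise. This reduces an instance with range $u-\ell$ to one with range $\ceil{(u-\ell)/2}$, and the recursion bottoms out after $\ceil{\log_2 \norm{u-\ell}_\infty}$ levels, which is the source of the logarithmic factor. At each level the only real choice is $r$; rather than enumerating all $2^n$ such vectors, I would precompute, for every relevant value $v = Ar$, the best profit $c\tran r$ over $r \in \set{0,1}^n$ with $Ar=v$ (a $0/1$ subproblem), and then set $\operatorname{solve}(b) = \max_v \bigl[K(v) + \operatorname{solve}((b-v)/2)\bigr]$ over admissible $v \equiv b \pmod 2$.

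The heart of the analysis is bounding the set of relevant subtargets at every node. Proximity pins the optimum of each scaled subinstance to within $\bo{\Delta}^d$ (in $\ell_1$) of the scaled LP optimum $x^*/2^{i}$, and a Steinitz-type rearrangement confines the attainable partial sums $Ar$ to a box of side $\bo{\Delta}$ about the corresponding point of the segment from $\nullvec$ to $b$; hence only $\bo{\Delta}^d$ values of $v$ matter per node. Solving the induced $0/1$ subproblem and performing the $(\max,+)$-style combination over these $\bo{\Delta}^d$ subtargets costs $n^{d+1}\cdot\bo{\Delta}^d$ once the work is summed over the $n$ relevant split positions along the Steinitz ordering and over the target box; multiplying by the $\log(\norm{u-\ell}_\infty)$ recursion depth gives the claimed bound.

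I expect the main obstacle to be making the proximity guarantee apply recursively: the Eisenbrand--Weismantel bound is stated for the original instance, whereas the recursion produces scaled instances with halved bounds and shifted targets, so I must argue that each subinstance inherits a proximity guarantee of the same order, keeping the relevant target box of side $\bo{\Delta}$ at every level. A secondary difficulty is ensuring that the refined, $d$-free proximity estimate behind \Cref{prop:ilp-axiotis} is genuinely available for these subinstances, since it is this estimate---rather than the classical $(2d\Delta+1)^d$ bound---that keeps $d$ out of the base $\bo{\Delta}^d$.
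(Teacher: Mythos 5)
Your high-level skeleton (halve the variable ranges, recurse to depth $\log\norm{u-\ell}_\infty$, do a DP over subtargets and columns at each level) matches the paper's construction, but the quantitative heart of your argument is wrong. You claim that a ``refined, $d$-free proximity estimate behind \Cref{prop:ilp-axiotis}'' together with a Steinitz-type rearrangement confines the relevant subtargets to a box of side $\bo{\Delta}$. No such estimate exists: \Cref{prop:ilp-axiotis} uses the Eisenbrand--Weismantel bound $d(2d\Delta+1)^d$, which keeps $d$ in the base, and proximity plays no role whatsoever in the paper's proof of this theorem. The paper's argument is elementary and goes the other way: in \Cref{prop:half-with-lbs} the remainder $x - 2x'$ has entries in $\{0,1,2\}$, so $\norm{2Ax'-b}_\infty = \norm{A(x-2x')}_\infty \le 2n\Delta$ trivially, and iterating (\Cref{prop:iterate-halving}) keeps every subtarget within $\ell_\infty$-distance $\bo{n\Delta}$ of $b/2^j$. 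The box of relevant subtargets per level thus has side $\bo{n\Delta}$, not $\bo{\Delta}$, and $\bo{n\Delta}^d = n^d\bo{\Delta}^d$ is exactly where the factor $n^{d+1}$ in the theorem comes from (one more $n$ for processing the columns one at a time). The $d$-free base is bought by the $n^d$ term, not by a sharper proximity bound; your own accounting betrays this, since a box of size $\bo{\Delta}^d$ would yield roughly $n\cdot\bo{\Delta}^d\cdot\log\norm{u-\ell}_\infty$, not the claimed bound. Relatedly, the LP relaxation you solve at the start is unnecessary.

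There is a second, independent gap: soundness of your binary decomposition with solution-independent upper bounds. You set $x = r + 2x'$ with $r\in\{0,1\}^n$ and recurse with bound $u' = \ceil{(u-\ell)/2}$, but then the recursion can assemble infeasible solutions: for $u_j = 2$ you get $u'_j = 1$, and $r_j + 2x'_j = 1 + 2 = 3 > u_j$, so the algorithm may return a value exceeding the true optimum. Making $u'$ depend on the chosen $r$ repairs feasibility but destroys the shared-subproblem structure that the whole DP relies on. The paper resolves exactly this tension by choosing $u' \coloneqq \floor{(u-1)/2}$ and allowing remainders in $[u_i - 2u'_i]_0 \subseteq \{0,1,2\}$ rather than $\{0,1\}$: then $u - 2u'$ has entries in $\{1,2\}$, the telescoping sum $\sum_j 2^j(u^{(j)} - 2u^{(j+1)})$ reconstructs $u$ exactly, and (as the paper emphasizes) $u'$ is the same for all solutions, so both completeness and soundness hold with a single state space per level. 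Finally, a smaller slip: your recursion $\operatorname{solve}(b) = \max_v\bigl[K(v) + \operatorname{solve}((b-v)/2)\bigr]$ drops a factor $2$ on the subproblem's profit, since $c\tran x = c\tran r + 2\,c\tran x'$; the paper handles this with the ``double the length'' edges in its layered graph.
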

The algorithm is a divide and conquer algorithm which repeatedly halves the
upper bounds.
\subsection{Organization of the Paper}
In \Cref{sec:redus} we present an overview of the reductions to proof
\Cref{theo:cycle}.  However, the concrete proofs are in
\Cref{sec:omitted-redus} as the proofs are similar to the ones by Cygan et
al.~\cite{cygan}.  Next we give the parameterized algorithm as well as the
generalization to ILPs in \Cref{sec:para-alg}.  The proof of the
divide-and-conquer algorithm for \Cref{prop:div-conq-alg} is given in
\Cref{sec:div-conq}.

\section{Reductions}\label{sec:redus}
For the Convolution problems, we will formulate the running time via
$T(\Pi(L),d)$, where $d$ is the dimension and $L$ is the size of the result array
- meaning the first parameter resembles the number of entries in the array. For
the Knapsack problems, we will add the number of items $n$ as parameter and
denote the running time of an algorithm via $T(n,\Pi(t+\constvec{1}),d)$, again
using $\Pi(t+\constvec{1})$ to denote the number of possible knapsack
capacities.

Similar to Cygan et al.~\cite{cygan}, we mainly look at functions satisfying
$T(\Pi(L),d) = c^{\mathcal{O}(d)} \cdot \Pi(L)^{\alpha}$ for some constants
$c,\alpha > 0$. Therefore, we remark that for all constant $c'>1$ we can write
$T(\Pi(c' \cdot L),d) \in \mathcal{O}(T(\Pi(L),d))$ since $d$ is fixed and we then have
$\Pi(c' L)^\alpha = c'^{d \cdot \alpha} \cdot \Pi(L)^\alpha$.

For all the mentioned problems we assume that all inputs, that is also any value
in any vector, consist of integers in the range of $[-W,W]$ for some
$W\in \mathbb{Z}$. This $W$ is generally omitted as a running time parameter and
$\polylog (W) $ are omitted or hidden in $T$ functions. We remark that --
unavoidably -- during the reductions we may have to deal with larger values than
$W$. We generally will multiply a factor of $\polylog (\lambda)$ in cases where the
values we have to handle increase up to $\lambda W$. Note that if $\lambda$ is a constant,
we again omit it.

We follow the same order as Cygan et al.~\cite{cygan}, because that makes the reduction
increasingly more complex. For the first reduction from \knapsunb{} to
\knapsone, we use the same reduction as in the one-dimensional case. The idea
is to simply encode taking a multitude of items via binary
encoding.

\begin{theorem}[restate=redbinenc,name={\knapsunb{} $\rightarrow$ \knapsone{}}]\label{theo:red1:binenc}
  A $T(n,\Pi(t),d)$ algorithm for \knapsone{} implies an
  $\mathcal{O}(T(n\log (t_{\max}), \Pi(t) ,d))$ algorithm for \knapsunb, where
  $t\in \mathbb{N}^d$ is the capacity of the knapsack.
\end{theorem}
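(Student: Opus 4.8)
The plan is to reduce \knapsunb{} to \knapsone{} by simulating each unbounded item with a small collection of 0/1 items via binary encoding, exactly as in the one-dimensional case. The key observation is that although items may be taken arbitrarily often, the capacity constraint bounds how many copies of any single item can appear in a feasible solution: if item \(i\) has weight vector \(w^{(i)}\in\mathbb{N}^d\) with some positive coordinate, then at most \(\lfloor t_j / w^{(i)}_j\rfloor \le t_{\max}\) copies fit along coordinate \(j\), so the number of usable copies is at most \(t_{\max}\). (Items with \(w^{(i)} = \nullvec{}\) and positive profit would make the problem unbounded, so we may assume every profitable item has a nonzero weight vector; a zero-weight item of nonnegative profit can simply be added once to the objective and discarded.) Thus each item need only be offered up to \(m_i \le t_{\max}\) times.

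The main step is the binary encoding itself. For each original item \(i\), let \(m_i\) be the maximum number of copies that can fit, and write \(m_i\) in binary as a sum of powers of two. I would introduce \(\lceil \log_2(m_i+1)\rceil \le \log_2(t_{\max}) + 1\) new 0/1 items, where the \(k\)-th new item has weight vector \(2^k \cdot w^{(i)}\) and profit \(2^k \cdot p_i\). Since any integer in \([0, m_i]\) is representable as a subset-sum of the powers \(\{1, 2, 4, \dots\}\) (using the standard binary decomposition, with a truncated top term so that we never exceed \(m_i\)), choosing a subset of these \(\log t_{\max}\)-many new items corresponds exactly to selecting some number \(c \in [0, m_i]\) of copies of item \(i\), contributing weight \(c\,w^{(i)}\) and profit \(c\,p_i\). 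Feasibility with respect to \(t\) is preserved in both directions, and profits match, so an optimal 0/1 solution on the expanded instance yields an optimal unbounded solution on the original and vice versa.

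The running-time bookkeeping is then routine. The transformation blows up the number of items from \(n\) to at most \(n \cdot (\log_2 t_{\max} + \mathcal{O}(1)) \in \mathcal{O}(n\log t_{\max})\), while the capacity vector \(t\) and dimension \(d\) are unchanged, so an algorithm for \knapsone{} running in time \(T(n, \Pi(t), d)\) solves the expanded instance in time \(T(n\log t_{\max}, \Pi(t), d)\), giving the claimed bound. One detail to check is that the new weights \(2^k w^{(i)}\) may exceed the original value bound \(W\); since \(2^k \le m_i \le t_{\max}\), the enlarged values are bounded by \(t_{\max} \cdot W\), which as noted in the conventions of \Cref{sec:redus} costs only a \(\polylog(t_{\max})\) factor that is absorbed into \(T\).

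I expect the main obstacle to be not the encoding, which is standard, but making the reduction genuinely \emph{dimension-independent} in a clean way: one must verify that a single scalar count \(c_i\) of copies of item \(i\) is faithfully encoded even though the weight is a vector, and that the cap \(m_i\) is determined correctly as \(\min_{j : w^{(i)}_j > 0} \lfloor t_j / w^{(i)}_j \rfloor\) rather than by any one coordinate. This is conceptually easy but is the place where the higher-dimensional setting could introduce an error if one naively reused the one-dimensional cap; handling the edge cases of zero-weight items and the truncated leading binary digit cleanly is the only real care required.
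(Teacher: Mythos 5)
Your proposal is correct and follows essentially the same route as the paper: simulate each unbounded item by $\mathcal{O}(\log t_{\max})$ binary-encoded copies $(2^k w^{(i)}, 2^k p_i)$, observe that any feasible multiplicity is a subset sum of these copies and vice versa, and note that the item count grows to $\mathcal{O}(n \log t_{\max})$ while $t$ and $d$ are unchanged. The only (harmless) differences are that the paper simply includes every power $2^j$ with $2^j w^{(i)} \le t$ and lets the capacity constraint exclude overshoots instead of truncating the top term, and that you additionally handle zero-weight items, an edge case the paper's proof silently ignores.
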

For the next reduction we reduce \supadd{} to a special case where each array
entry is non-negative and values fulfill a monotony property.

\begin{theorem}[restate=redprimdual,name={\supadd{} $\rightarrow$ \knapsunb{}}]\label{theo:red:primdual}
  A \(T(n,\Pi(t),\allowbreak d)\) algorithm for \knapsunb{} implies the existence
  an algorithm that solves \supadd{} in time
  $\mathcal{O}(T(2\Pi(L),\Pi(2L),d) \polylog{(d L_{\max})})$ for an array $A$ with size $L$.
\end{theorem}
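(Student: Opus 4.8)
The plan is to follow the two–phase strategy announced just before the statement: first turn an arbitrary \supadd{} instance $A$ into an equivalent instance that is non-negative and strictly monotone, and then encode this special instance as an \knapsunb{} instance whose single optimal value reveals whether any superadditivity constraint is violated. Conceptually, the unbounded-knapsack profit function is the superadditive hull of the item profits, so testing superadditivity amounts to testing whether $A$ already coincides with its own hull; this is the ``primal/dual'' view that the reduction makes algorithmic.

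\emph{Phase 1 (normalisation).} First I would dispose of the origin: putting $u=\nullvec{}$ in $A_v \geq A_u + A_{v-u}$ forces $A_{\nullvec{}}\le 0$ for every superadditive array, so I check $A_{\nullvec{}}\le 0$ directly (rejecting otherwise) and then reset $A_{\nullvec{}}\coloneqq 0$; a short case analysis over the constraints touching the origin shows this does not change whether $A$ is superadditive. Next I add the affine function $v\mapsto\langle c,v\rangle$ with $c=\constvec{2W+1}$ to every entry, obtaining $A'_v\coloneqq A_v+\langle c,v\rangle$. Because the linear term cancels in $A'_v-(A'_u+A'_{v-u})=A_v-(A_u+A_{v-u})$, superadditivity is preserved in both directions. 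Moreover, for $u\le v$ with $u\neq v$ we have $A'_v-A'_u=(A_v-A_u)+\langle c,v-u\rangle\geq -2W+(2W+1)>0$, so $A'$ is strictly monotone, and since $A'_{\nullvec{}}=0$ and $c>0$ it is non-negative; all entries stay bounded by $M\coloneqq A'_{L-\constvec{1}}=\bo{d\,W\,L_{\max}}$.

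\emph{Phase 2 (knapsack encoding via reflection).} The crucial observation is that for a monotone array the unbounded-knapsack profit $f(v)\coloneqq\max\set{\sum_j A'_{u^{(j)}}}[\sum_j u^{(j)}\le v]$ satisfies $f(v)=A'_v$ for every position $v$ exactly when $A'$ is superadditive: the single item at $v$ gives $f(v)\geq A'_v$ always, while superadditivity lets me repeatedly merge any two chosen items $u,w$ into the item $u+w$ without losing profit (all intermediate positions stay $\le v\le L-\constvec{1}$) and monotonicity then bounds the merged profit by $A'_v$; conversely a violation $A'_v<A'_u+A'_{v-u}$ immediately yields $f(v)>A'_v$. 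It remains to detect $\max_v(f(v)-A'_v)>0$ with a single knapsack call. To this end I build an instance of capacity $t\coloneqq 2L-\constvec{1}$ with two families of items: a \emph{forward} item $(u,\,A'_u)$ for each position $u$, and a \emph{reflected} item $(2L-\constvec{1}-v,\;P-A'_v)$ for each position $v$, where $P$ is a polynomially bounded constant exceeding every achievable forward-only profit (keeping all profits non-negative as required). Using one reflected item at $v$ frees exactly budget $v$ for forward items and contributes $f(v)+P-A'_v$; the choice $t=2L-\constvec{1}$ makes any two reflected items overflow the capacity, and the size of $P$ forces at least one reflected item into the optimum. Hence the returned optimum equals $P+\max_v(f(v)-A'_v)$, which is $P$ precisely when $A'$ — equivalently $A$ — is superadditive, so comparing the knapsack optimum against $P$ decides \supadd{}.

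\emph{Resources and the main obstacle.} The constructed instance has $2\Pi(L)$ items, capacity $t=2L-\constvec{1}$ with $\Pi(t+\constvec{1})=\Pi(2L)$ feasible capacities, and the \emph{same} dimension $d$; the profits and weights are polynomially bounded in $L_{\max}$ and $W$, which accounts for the $\polylog(d\,L_{\max})$ arithmetic overhead and yields the claimed $\bo{T(2\Pi(L),\Pi(2L),d)\,\polylog(d\,L_{\max})}$ bound. I expect the main difficulty to lie in the correctness of the reflection gadget rather than in the normalisation: one must argue simultaneously that the top-corner optimum uses exactly one reflected item (at most one from the capacity doubling, at least one from the magnitude of $P$) and that packing forward items beyond the freed budget $v$ never helps. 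Both points rest on the monotonicity secured in Phase~1, which is exactly why that preprocessing step is needed.
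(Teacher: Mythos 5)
Your proposal is correct and follows essentially the same route as the paper: first normalise $A$ to a non-negative, monotone array by adding a linear function $c\cdot\norm{v}_1$ (the paper's Lemma~\ref{lem:AddMon}), then encode positions as primal/forward items $(v,A'_v)$ and dual/reflected items $(\approx 2L-v,\,P-A'_v)$ in an \knapsunb{} instance of capacity $\approx 2L$, so that the optimum exceeds the threshold $P$ (the paper's $D$) exactly when superadditivity fails. The only differences are cosmetic — capacity $2L-\constvec{1}$ instead of $2L$, and phrasing correctness via the forward profit function $f$ rather than the paper's direct item-merging argument on an optimal packing; your explicit requirement that $P$ dominate \emph{every} forward-only profit is in fact slightly more careful than the paper's choice of $D$.
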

The next reduction differs from Cygan et al.~\cite{cygan}. We also combine our
input arrays together, but in the context of arrays we need to handle a number
of different combinations more than in the one-dimensional case. We therefore
add a block of negative values in an initial dummy block. This way, any
combination that is not relevant to the actual upper bound test, will result
positively when tested in the upper bound test.

\begin{theorem}[restate=redblockmatrix,name={\maxconvup{} $\rightarrow$ \supadd{}}]\label{theo:red:blockmatrix}
  A \(T(\Pi(L),\allowbreak d)\) algorithm for \supadd\ implies an $\mathcal{O}(T(4\Pi(L),d))$ algorithm for
  \maxconvup.
\end{theorem}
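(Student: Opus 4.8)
The plan is to encode the three input arrays $A$, $B$, $C$ into a single array $S$ whose superadditivity is equivalent to the upper bound test $A \oplus B \le C$, using a block layout along one coordinate. First I would extend the first coordinate by a factor of four, forming an array $S$ of size $L' \coloneqq (4L_1, L_2, \ldots, L_d)$, and partition it along dimension one into four consecutive slabs, each a $d$-dimensional block of size $L$. Into slab $1$ I place $A$ (mapping index $i$ to first coordinate $L_1 + i_1$ and the remaining coordinates to $i_2, \ldots, i_d$), into slab $2$ I place $B$ (index $j$ at first coordinate $2L_1 + j_1$), into slab $3$ I place $C$ (index $k$ at first coordinate $3L_1 + k_1$), and slab $0$ serves as the dummy block. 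This offset by one slab is exactly what aligns the addition: a position carrying $A_i$ plus a position carrying $B_j$ lands at first coordinate $3L_1 + (i_1 + j_1)$, i.e. inside slab $3$ at the entry carrying $C_{i+j}$, and it is a valid position precisely when $i + j < L$. In the remaining coordinates the positions simply add, so the resulting superadditivity constraint reads $A_i + B_j \le C_{i+j}$ for all $i, j \ge \nullvec{}$ with $i + j < L$ — which is exactly $A \oplus B \le C$.

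Next I would choose per-slab additive offsets and a dummy value so that \emph{every other} superadditivity constraint is trivially satisfied, making the equivalence exact. Since the slab index of a sum is the sum of the slab indices, the only pairs remaining in range (slab index $\le 3$) are: any pair involving the dummy slab, the pair slab $1$ $+$ slab $1 \to$ slab $2$ (that is, $A + A$ spilling into the $B$-region), and slab $1$ $+$ slab $2 \to$ slab $3$ (the intended test). I would shift the stored values of slabs $2$ and $3$ by a common constant $\Gamma = \Theta(W)$ while leaving slab $1$ unshifted; because $\mathrm{off}_A + \mathrm{off}_B = \mathrm{off}_C$, the intended constraint stays tight and reduces exactly to $A_i + B_j \le C_{i+j}$, while the constraint $A_i + A_{i'} \le B_{k'} + \Gamma$ holds automatically as soon as $\Gamma > 3W$. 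Setting all dummy entries to a sufficiently negative value $-\Theta(W)$ makes every constraint involving slab $0$ hold automatically, since the left-hand side becomes hugely negative while the right-hand side is bounded below by $-W$ (plus an offset).

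With these choices I would verify both directions. If $A \oplus B \le C$, then all intended constraints hold and all others hold by construction, so $S$ is superadditive; conversely, superadditivity of $S$ in particular forces the slab $1$ $+$ slab $2$ constraints, which return $A \oplus B \le C$. Since every introduced value stays of magnitude $\mathcal{O}(W)$, the $\polylog(W)$ accounting is unaffected. Finally, $S$ has $\Pi(L') = 4\,\Pi(L)$ entries and is built in $\mathcal{O}(4\Pi(L))$ time, so running the assumed \supadd{} algorithm on $S$ yields the claimed $\mathcal{O}(T(4\Pi(L), d))$ bound.

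The step I expect to be the main obstacle is the bookkeeping in the second paragraph: I must be certain that no unintended pair of positions can violate superadditivity — including the cases where an in-slab sum spills into the next slab (dummy $+$ dummy falling into slab $1$, dummy $+ A$ or dummy $+ B$ falling one slab further, and $A + A$ falling into slab $2$) — while simultaneously ensuring the intended pair carries no slack that would decouple it from the test. Pinning down the thresholds $\Gamma > 3W$ and the dummy value $-\Theta(W)$ so that all spill-over cases are covered, and confirming that out-of-range slab sums genuinely impose no constraint, is where the real care lies.
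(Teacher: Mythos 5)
Your proposal is correct and follows essentially the same construction as the paper's proof: a $4L_1 \times L_2 \times \cdots \times L_d$ array consisting of a very-negative dummy block followed by $A$, $B$, $C$ with additive offsets satisfying $\mathrm{off}_A + \mathrm{off}_B = \mathrm{off}_C$ (the paper uses $K, 4K, 5K$ with a $0$ at the origin; you use $0, \Gamma, \Gamma$ with a negative origin, which works equally well), so that the slab-$1$-plus-slab-$2$ constraints reduce exactly to $A \oplus B \le C$ while all other constraints are slack. The only blemish is your claim that ``the slab index of a sum is the sum of the slab indices'' --- a carry in the first coordinate can push a sum one slab further --- but your final enumeration of constrained pairs is nevertheless complete, and giving $B$ and $C$ the same shift makes the $A+A$ spill into the $C$-slab harmless, so the argument goes through.
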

For the next reduction from \maxconv{} to \maxconvup, we will prove that we can
use an algorithm for \maxconvup{} to also identify a position that violates the
upper bound property.

\begin{theorem}[restate=redsquarestuff,name={\maxconv{} $\rightarrow$ \maxconvup{}}]\label{theo:red:squarestuff}
A $T(\Pi(L),d)$ algorithm for \maxconvup{} implies an $\mathcal{O}(2^d \Pi(L) \cdot T(2^{d}\sqrt{\Pi(L)},d) \cdot d \cdot \log(L_{\max}))$ algorithm for \maxconv{}.
\end{theorem}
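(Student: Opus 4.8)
The plan is to turn the decision oracle for \maxconvup{} into a procedure that actually recovers the convolution values, by first extracting witnesses from the oracle and then localizing them through a block decomposition that keeps every oracle call on a sub-instance of size $\mathcal{O}(2^d\sqrt{\Pi(L)})$.

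First I would build a \emph{witness-finding} routine, exactly as the text preceding the statement promises. Since all entries lie in $[-W,W]$, the convolution value at any position is bounded in absolute value by $2W$, so I can simulate ``$+\infty$'' by any integer larger than $2W$. To decide whether a violation $(A\oplus B)_v > C_v$ occurs inside a chosen sub-box $R$ of output positions, I raise $C_v$ to this masking value for every position outside $R$ and call the oracle: the answer reports a violation if and only if one lies in $R$. Bisecting $R$ coordinate by coordinate then isolates a single violating position in $\mathcal{O}(d\log L_{\max})$ oracle calls. This is the one place where the factor $d\log L_{\max}$ is paid.

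Second, to compute all of $C = A\oplus B$ rather than merely detect one violation, I would impose a grid of blocks of side $\lceil\sqrt{L_i}\rceil$ in coordinate $i$, so that there are $\mathcal{O}(\sqrt{\Pi(L)})$ blocks, each containing $\mathcal{O}(\sqrt{\Pi(L)})$ positions. Writing $u$, $v-u$ and $v$ in block/offset form shows that the contribution to a fixed output position comes, for each carry pattern $\mathbf{c}\in\{0,1\}^d$, from a convolution in block-index space of two arrays of size $\mathcal{O}(\sqrt{\Pi(L)})$. I would apply the witness routine to these block-sized instances in two stages: a coarse search that localizes the block of the maximizing $u$, followed by a fine search inside that block which pins down the exact witness and hence the value $C_v$. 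The $2^d$ carry patterns are the candidates that must all be tried and maximized over, which is where the $2^d$ factors originate.

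A recurring technical point, and what I expect to be the main obstacle, is faithfully re-expressing each \emph{truncated} partial (block) convolution as a genuine \maxconvup{} instance with equal input and output size: the partial convolutions naturally have size up to about $2\sqrt{L_i}$ in coordinate $i$, and turning them into honest equal-size instances requires padding with the masking value, which is exactly what inflates the oracle argument to $2^d\sqrt{\Pi(L)}$. The delicate bookkeeping is to show that the coarse-then-fine witness search on these padded block instances, combined with the maximization over the $2^d$ carries, correctly recovers $C_v$ for every position while charging only $\mathcal{O}(2^d)$ witness-finds per output position; multiplying the $\Pi(L)$ positions by the $\mathcal{O}(2^d)$ candidates, the $\mathcal{O}(d\log L_{\max})$ oracle calls per witness, and the per-call cost $T(2^d\sqrt{\Pi(L)},d)$ then yields the claimed running time.
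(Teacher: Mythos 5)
Your first step (witness extraction via masking and coordinate-wise bisection, at a cost of $\mathcal{O}(d\log L_{\max})$ oracle calls) matches the paper's auxiliary lemma and is fine. The gap is in the second step. Your plan is per-position: for each output position $v$ and each carry pattern, a ``coarse search that localizes the block of the maximizing $u$'', then a fine search inside that block. This cannot be carried out with oracle calls of size $\mathcal{O}(2^d\sqrt{\Pi(L)})$. For a fixed $v$ (and carry pattern) the maximizing $u$ may lie in any of the $\Theta(\sqrt{\Pi(L)})$ blocks, so a bisection step must ask whether a violation occurs with $u$ ranging over half of block-index space --- that is, over $\Theta(\Pi(L))$ actual positions of $A$ --- and the corresponding \maxconvup{} instance has size $\Theta(\Pi(L))$, not $\mathcal{O}(2^d\sqrt{\Pi(L)})$. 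You cannot compress to block granularity either, because the block-level maxima you would need as entries are precisely the unknown convolution values. Relatedly, your claim that the contribution to a fixed output position is ``a convolution in block-index space of two arrays of size $\mathcal{O}(\sqrt{\Pi(L)})$'' is not correct: the candidate pairs $(u, v-u)$ are indexed by block index \emph{and} offset, so fixing $v$ gives one block-index convolution \emph{per offset of $u$}, i.e.\ $\sqrt{\Pi(L)}$ of them. Finally, an argmax cannot be localized by upper-bound queries alone; every such query needs a threshold, i.e.\ a binary search over \emph{values}, which your proposal never sets up. As a consequence, the accounting ``$\mathcal{O}(2^d)$ witness-finds per output position'' is unsupported, and the claimed running time does not follow.

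The paper avoids all of this by never working per position. It runs a \emph{global, parallel} binary search over values: it maintains a single guess array $C^{(r)}$ holding the current value-interval midpoint for every position, and in each of the $\mathcal{O}(\log K)$ rounds (absorbed into the $\polylog(W)$ convention of Section~\ref{sec:redus}) it runs the witness-finder once for each of the $m^2 = \Pi(L)$ block pairs on the padded instances of size $2^d\sqrt{\Pi(L)}$. Whenever a block pair reports a violating position, that position's guess is overwritten with a huge masking value and the \emph{same} block pair is re-tested; since each position can be reported at most once per round, the extra calls are bounded by $\Pi(L)$ \emph{in total per round}, amortized over all positions rather than charged per position. This amortization --- $\mathcal{O}(\Pi(L))$ small oracle calls per value-round, each costing $T(2^d\sqrt{\Pi(L)},d)\cdot d\cdot\log L_{\max}$ --- is the missing idea; the padding issue you single out as the main obstacle is in fact the easy part, handled by filling with dummy values $-K$.
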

The last reduction is based on Cygan et al.~\cite{cygan} and Bringmann
\cite{bringmannSubsetSum}. To make their approach work even in higher
dimensional cases, we require a new more refined distribution of items into
layers. With this new partition of items many used techniques such as
color-coding naturally translate into higher
dimensions.

\begin{theorem}[\knapsone{} $\rightarrow$ \maxconv{}]\label{theo:red:bringmann}
If \maxconv{} can be solved in time $T(\Pi(L),d)$ then \knapsone{} can be solved with a probability of at least $1-\delta$ in time $\mathcal{O}(T(\Pi(12t),d)\log (d\cdot t_{\max})\log^3{(\log n/\delta)}\cdot  d\cdot \log n)$ for any $\delta \in (0,1)$.
\end{theorem}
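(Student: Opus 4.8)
The plan is to follow the color-coding paradigm of Bringmann~\cite{bringmannSubsetSum}, as adapted to knapsack by Cygan et al.~\cite{cygan}, using the assumed \maxconv{} oracle as the tool that combines partial solutions. The starting observation is that both profit and weight are additive over disjoint item sets: if we split $I$ into groups $G_1,\dots,G_r$ and let $P^{(j)}$ be the array with $P^{(j)}_w$ equal to the maximum profit of a subset of $G_j$ of total weight exactly $w$ (with $P^{(j)}_{\nullvec{}}=0$ and $-\infty$ for unreachable $w$), truncated to positions $\le t$, then the profit array of all of $I$ is the iterated convolution $P^{(1)}\oplus\cdots\oplus P^{(r)}$, and the optimum we seek is $\max_{w\le t}(P^{(1)}\oplus\cdots\oplus P^{(r)})_w$. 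Making the arrays monotone by a coordinate-wise prefix maximum lets us read the answer off directly at position $t$ and keeps truncation to size $t+\constvec{1}$ exact. Thus the whole task reduces to (i) splitting $I$ into groups whose individual arrays are cheap to build, and (ii) combining these arrays through few calls to the \maxconv{} oracle.

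The heart of the reduction, and the genuinely new ingredient compared to the one-dimensional case, is the partition of items into layers. Zero-weight items are free and folded into the base profit; for an item $i$ that fits (i.e.\ $w^{(i)}\le t$) define its layer as the integer $\ell$ with $\max_{j\in[d]} w^{(i)}_j/t_j \in (2^{-(\ell+1)},2^{-\ell}]$, and discard items with some $w^{(i)}_j > t_j$. Since every nonzero coordinate satisfies $1/t_{\max}\le w^{(i)}_j/t_j\le 1$, there are only $\mathcal{O}(\log(d\,t_{\max}))$ nonempty layers. The key claim is that any feasible solution uses at most $\mathcal{O}(d\,2^{\ell})$ items from layer $\ell$: each layer-$\ell$ item has a coordinate $j$ with $w^{(i)}_j > t_j 2^{-(\ell+1)}$, so if we assign each item to one such constraining coordinate, some coordinate receives at least a $1/d$ fraction of the solution's layer-$\ell$ items, and along that coordinate their weights sum to at most $t_j$, bounding their number by $2^{\ell+1}$. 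This is where the factor $d$ in the running time originates, and it is the main obstacle: the layering must simultaneously make each layer's items large (so that color-coding needs only polynomially many colours) and keep the number of layers logarithmic, which in $d$ dimensions forces the max-ratio definition above rather than a naive coordinate-wise binning.

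With the layering in hand, each layer is treated by color-coding. In layer $\ell$, where any solution uses $s_\ell=\mathcal{O}(d\,2^{\ell})$ items, we randomly colour its items with $\Theta(s_\ell^{2})$ colours; then with constant probability the (at most $s_\ell$) solution items all land in distinct classes, so each colour class contributes at most one item to the optimum and its array is the trivial ``take no item or one item'' array, buildable in linear time. We combine the colour classes of a layer, and then the layers themselves, along balanced binary trees of \maxconv{} calls. Because every layer-$\ell$ item satisfies $w^{(i)}\le t\,2^{-\ell}$ coordinate-wise, a tree node that has merged $2^{a}$ colour classes only needs to store weights up to $\min(2^{a-\ell}t,t)$, so truncating each intermediate array to that size keeps every oracle call on an array of size at most $12\,t$; the constant absorbs the padding and truncation slack between the equal-size truncated \maxconv{} and the ``weight $\le t$'' semantics, and by the remark that $T(\Pi(c' t),d)\in\mathcal{O}(T(\Pi(t),d))$ for constant $c'$ this enlargement is free.

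A size-aware accounting over the $\mathcal{O}(\log n)$-level combination trees and the $\mathcal{O}(\log(d\,t_{\max}))$ layers bounds the number of oracle calls by $\mathcal{O}(d\log n\cdot\log(d\,t_{\max}))$. To upgrade the constant per-layer success probability to the global guarantee $1-\delta$, I would repeat each layer's colouring $\mathcal{O}(\log^{3}(\log n/\delta))$ times following Bringmann's \textsc{ColorCoding} amplification and take the coordinate-wise maximum of the resulting arrays, applying a union bound over the layers; this is the source of the $\log^{3}(\log n/\delta)$ factor. Multiplying the oracle-call count, the amplification factor and the per-call cost $T(\Pi(12t),d)$ yields the claimed running time. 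The part demanding the most care is not this bookkeeping but the interplay between the random colouring and the new layering: one must verify that the high-probability event ``each colour class of each layer holds at most one solution item'' is precisely what makes the trivial single-item arrays convolve to the true per-layer optimum, so that correctness — and not only the running time — carries over to $d$ dimensions.
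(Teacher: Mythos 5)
Your skeleton matches the paper's (dyadic layering, colour-coding inside layers, combination by truncated \maxconv{} calls, amplification plus a union bound), and your max-ratio layering with the pigeonhole bound of $\mathcal{O}(d2^{\ell})$ solution items per layer is a correct alternative to the paper's coordinate-indexed layers $L^i_j$ with minimal-level assignment. The fatal gap is inside each layer: you colour the items of layer $\ell$ with $\Theta(s_\ell^{2})$ colours \emph{in one shot} and then merge all colour classes by a tree of convolutions. Merging $N$ classes needs $N-1$ \maxconv{} calls, and truncation only keeps them cheap while the merged arrays are small: since a layer-$\ell$ item can weigh $t2^{-\ell}$ per coordinate, any subtree containing $2^{\ell}$ classes already spans weights up to $t$, so \emph{every} merge tree over $N$ classes contains at least $N/2^{\ell+1}$ full-size calls of cost $T(\Pi(t),d)$ each. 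Concretely, put all $n$ items into one layer with $2^{-\ell}\approx d/\sqrt{n}$: then $s_\ell\approx 2\sqrt n$, the birthday-paradox argument forces $\Theta(n)$ colours, up to $\Theta(n)$ nonempty classes arise, and you incur $\Omega(n/2^{\ell}) = \Omega(d\sqrt n)$ full-size calls \emph{per repetition}, against a budget of $\mathcal{O}(d\log n\,\log(dt_{\max})\log^{3}(\log n/\delta))$ calls --- an overshoot of $\widetilde{\Omega}(\sqrt n)$ already for $t_{\max} = \poly(n)$. This is precisely what Bringmann's two-level scheme, implemented in the paper's Lemma~\ref{lem:solveLayer}, is there to avoid: first randomly split the layer into $m\approx s_\ell/\log(s_\ell/\delta)$ groups so that each group contains only $\gamma=\mathcal{O}(\log(s_\ell/\delta))$ optimal items with high probability (Chernoff), run colour-coding with only $\gamma^{2}$ colours per group on capacity scaled down to $\mathcal{O}((\gamma/s_\ell)\,t)$, and then merge the groups in a doubling tree; superadditivity of $T$ telescopes the whole layer to $\mathcal{O}(T(\Pi(12t),d)\log^{3}(s_\ell/\delta))$. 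In particular the cubed logarithm is $(\text{colours per group})^{2}\times\mathcal{O}(\log)$ repetitions, not ``repeat the colouring $\log^{3}$ times'' as you assert; with your one-shot scheme the repetitions multiply a per-layer cost that is already far above polylogarithmic.

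Two smaller points. First, you never cap the dyadic levels: the paper stops at $j=\lceil\log n\rceil$ and collects all lighter items into a single catch-all layer (item bound $n$ suffices there, since $n$ items of coordinate-wise weight at most $t/n$ still fit), which is what bounds the number of layers by $\mathcal{O}(d\log n)$; without it your nonempty-layer count is $\Theta(\min\{n,\log t_{\max}\})$. Second, you budget the theorem's $\log(d t_{\max})$ factor for this layer count, but in the paper that factor is the arithmetic overhead for adding profits of bit length $\Theta(\log(d\,t_{\max}W))$ --- a cost your accounting never covers and which cannot be spent twice. Both of these are repairable, but together with the per-layer gap above the claimed running time is not established by the proposal as written.
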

We remark that this also yields a respective algorithm for \knapsone{} with the
$\Pi(L)^2$ algorithm for \maxconv{}.  Lower order
improvements like the results of Bremner \cite{bremner} or Chan and Lewenstein
\cite{ChanMono} can be applied by calculating the convolution
using \Cref{prop:calc-conv-helix}.

\section{Parameterized Algorithm for Multi-Dimensional Knapsack}\label{sec:para-alg}
In this part, we will consider solving Knapsack such that the capacity is
completely utilized.  This enables a concise presentation and is not a
restriction as discussed after the concrete problem definition.

\begin{problem}[framed]{\knapsoneeq{}}
  Input: & Set $I$ of $n$ items each defined by a profit $p_i\in \mathbb{R}_{\ge 0}$, weight
  vector $w^{(i)} \in \mathbb{N}^d$, and upper bound \(u_i\in\bN\) along with a knapsack of capacity $t\in\mathbb{N}^d$. \\
  Problem: & Find a vector \(x\in\bN^n\) such that \(x\leq u\), $\sum_{i=1}^n{x_iw^{(i)}} = t$, and $\sum_{i=1}^n{x_ip_i}$ is maximal.
\end{problem}
\noindent We will not only solve this problem for the given capacity $t$ but for all
smaller capacities.  In detail, we will construct a solution array $A$ with size
$t+\constvec{1}$ such that for any $v\le t$ we have that $A_v$ is the maximum
profit of an item set whose total weight is exactly $v$.  Note that we can
construct a solution for \knapsbound{} after solving \knapsoneeq{} by
remembering the highest profit achieved in any position.

The algorithm we will show is based on the algorithm of Axiotis and Tzamos
\cite{axiotis19}. They solved one-dimensional Knapsack by splitting all items
into subsets with equal weight. They proceed then to solve these resulting
sub-instances. These sub-instances can be easily solved by gathering the highest
profit items that fit into the knapsack. Finally, they combine these resulting
partial solutions via convolution. The profits of one such partial solution
forms a concave sequence, which allows each convolution to be calculated in
linear time.

\begin{definition}[Concave Sequences]
  Let $a=(a_i)_{i \le n}$ be a sequence of numbers of length $n$. We call the
  sequence $a$ \emph{concave} if for all $i \le n-2$ we have
  $a_{i+1} - a_i \ge a_{i+2} - a_{i+1}$.
\end{definition}
\begin{lemma}[{\cite[Lemma 9]{axiotis19}}]\label{prop:linear-concave-conv}
  Given an arbitrary sequence of length \(m\) and a concave sequence of length
  \(n\) we can compute the \((\max, +)\) convolution of \(a\) and \(b\) in
  time \(\bo{m + n}\).
\end{lemma}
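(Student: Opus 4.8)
The plan is to exploit concavity of $b$ to turn the convolution into a problem with a monotonicity structure that a SMAWK-type or greedy sweep can solve in linear time. Recall we want, for each output position $k$, the value $c_k = \max_{i+j = k} (a_i + b_j)$, where $a$ has length $m$ and $b$ is concave of length $n$. Consider the matrix $M$ defined by $M_{k,i} = a_i + b_{k-i}$ whenever $0 \le k-i \le n-1$ (and $M_{k,i} = -\infty$ otherwise), so that $c_k = \max_i M_{k,i}$ is a row maximum. The key observation is that concavity of $b$ makes $M$ (inverse) \emph{Monge}: for indices $k < k'$ and $i < i'$ one checks, using $b_{k-i}+b_{k'-i'} \ge b_{k-i'}+b_{k'-i}$, that the row-maximizing column indices are monotone nondecreasing in $k$. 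This is exactly the totally-monotone structure on which the SMAWK algorithm computes all row maxima in time linear in the matrix dimensions, here $\mathcal{O}(m+n)$.

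First I would make the Monge/total-monotonicity property precise. The inequality $b_{k-i}+b_{k'-i'} \ge b_{k-i'}+b_{k'-i}$ for $i<i'$ and $k<k'$ follows from the defining concavity $b_{j+1}-b_j \ge b_{j+2}-b_{j+1}$, since the concave differences telescope: moving the split point is governed by a sum of consecutive decreasing increments of $b$. I would verify this on adjacent indices first (the case $i'=i+1$, $k'=k+1$) and then extend to arbitrary gaps by summing. From this, the additive term $a_i$ cancels in the cross-difference $M_{k,i}+M_{k',i'} - M_{k,i'}-M_{k',i}$, so it depends only on $b$; hence $M$ is totally monotone regardless of the arbitrary entries of $a$.

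Next I would set up the boundary handling carefully. The $-\infty$ entries (positions where $k-i$ falls outside $[0,n-1]$) form a staircase pattern; SMAWK tolerates $-\infty$ padding as long as total monotonicity is preserved, which it is because the feasible band of valid $(k,i)$ pairs is itself monotone. I would then invoke SMAWK on the implicit matrix $M$, which never materializes $M$ explicitly but queries entries $M_{k,i}$ in $\mathcal{O}(1)$ time each and computes all $m+n-1$ row maxima in $\mathcal{O}(m+n)$ total. Reading off $c_k$ as the computed row maximum completes the computation.

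The main obstacle I anticipate is the boundary bookkeeping rather than the core idea: the feasibility window $0 \le k-i \le n-1$ together with $0 \le i \le m-1$ makes the valid columns for each row $k$ a shifting interval, and one must confirm that restricting to $-\infty$ outside this interval keeps the full matrix totally monotone so that SMAWK applies verbatim. A secondary subtlety is verifying the direction of the inequality (concave vs.\ convex) matches the max-convolution (rather than min-), but this is fixed once the sign of the telescoped difference is checked. Since this is Lemma~9 of Axiotis and Tzamos \cite{axiotis19}, I would cite their argument for the detailed SMAWK invocation and focus the writeup on establishing the total monotonicity induced by concavity.
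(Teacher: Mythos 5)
Your proposal is correct and matches the standard argument: this lemma is not proved in the paper at all (it is quoted verbatim as Lemma~9 of Axiotis and Tzamos \cite{axiotis19}), and their proof is exactly the reduction you describe — concavity of $b$ makes the matrix $M_{k,i} = a_i + b_{k-i}$ (inverse) Monge, hence totally monotone, so SMAWK computes all row maxima in $\mathcal{O}(m+n)$ time. Your handling of the $-\infty$ staircase is also sound, since whenever both right-hand entries $M_{k,i'}, M_{k',i}$ in the cross-difference are finite, the feasibility window forces the left-hand entries to be finite as well, so the padded matrix stays totally monotone.
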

We achieve a similar result and calculate higher dimensional convolutions in
linear time. In fact, we will reduce our problem to calculating convolution of
one-dimensional concave sequences. This way we can calculate the convolution of
our sub-solutions in linear time in the number of entries in our array.

\boundedKnapsackAlg*
\noindent We remark, that in most cases
$V \coloneqq \prod(t + \constvec{1})> t_{\max} \log t_{\max} $ so generally this algorithm
will have a running time of $\mathcal{O}(n+DV )$. However, in the special case of a slim
array that has size $t$ with one large value $t_j$ and other values being very
small $t_i \ll t_j$ for $i \neq j$ our algorithm might have a slightly worse running
time of $\mathcal{O}(n+D t_{\max} \log t_{\max})$.  Note that is algorithm also works for
\knaps\ by setting every upper bound to \(t_{\max}\).
\begin{proof}[Proof of \Cref{theo:algo}]
  Since we have \(D\) different weights, let \(w(1), \ldots, w(D)\) be the different
  weights.  We partition the items \(I\) into \(D\) sets of the same weight,
  \(I_{w(i)} \coloneqq \set{j\in I}[w(i) = w^{(j)}]\) for all \(i\in[D]\).  We start with
  an empty solution array \(R^0\) of size \(t + \constvec{1}\) such that
  \(R^0_{\constvec{0}} = 0\) and \(R^0_v = -\infty\) for \(v \ne \constvec{0}\).  We will
  calculate solution arrays \(R^i\) for \(i\in[D]\) where \(R^i\) is the
  solution array for instance restricted to the items
  \(\bigcup_{j = 1}^iI_{w(j)}\).  Next we describe how we can calculate
  \(R^i\) from \(R^{i - 1}\).

  Consider an optimal solution for some position \(v\) in \(R^i\), i.e.,
  containing only items from \(\bigcup_{j = 1}^iI_{w{(j)}}\).  The solution
  contains a certain number of items of weight \(w{(i)}\).  Let this number be
  \(k\).  We have \(v' \coloneqq v - kw{(i)} \ge \constvec{0}\).  Thus, the other items
  in the solution are from \(\bigcup_{j = 1}^{i - 1}I_{w{(j)}}\) and have combined
  profit \(R^{i - 1}_{v'}\) otherwise our initial solution is not optimal or
  \(R^{i - 1}\) is not correct.  Also, the \(k\) items with weight \(w{(i)}\) in
  the optimal solution have the highest profit possible, otherwise the solution
  is not optimal.  Therefore, we have
  \begin{equation}
    \label{eq:concave-conv-is-correct}
    R^i_v = \smashoperator{\max_{\substack{k \geq 0\\v - k \cdot w{(i)} \geq \constvec{0}}}} R^{i - 1}_{v - k \cdot w} + f_w(k)
  \end{equation}
  where \(f_{w{(i)}}(k)\) is the largest profit for taking \(k\) items of weight
  \(w{(i)}\).  Hence, only position pairs \(v, u\) such that their difference is
  an integer multiple of \(w{(i)}\) (\(v - u \in w{(i)}\bZ\)) are relevant for the
  calculation of \(R^i_v\) or \(R^i_u\).  This condition forms an equivalence
  relation and thus we can calculate the values of \(R^i\) per equivalence
  class.  Every equivalence class has the structure
  \(\set{v, v + w{(i)}, v + 2\cdot w{(i)}, v + k_v \cdot w{(i)}} \) with
  \(v - w{(i)} \not\geq \constvec{0}\).  Define the sequences
  \((r_j)_{j\leq k_v}\) with \(r_j \coloneqq R^{i - 1}_{v + j \cdot w{(i)}}\) and
  \((a_j)_{j\leq k_v}\) with \(a_j \coloneqq f_{w{(i)}}(j)\).  Thus, we can calculate
  \(R^i\) by \(R^i_{v + j\cdot w{(i)}} \coloneqq (r \oplus a)_j\) for every
  \(j\in[k_v]\).  This is sufficient by \Cref{eq:concave-conv-is-correct}.  Note
  that, we only have to calculate the convolution once as we can use the result
  for every element of the equivalence class.  We always have
  \(k_v \leq t_{\max}\) and \((a_j)_{j\le k_v}\) is a concave sequence.

  The items can be partitioned in the sets \(I_{w{(i)}}\) in time
  \(\bo{dn + \Delta}\).  In every set the \(t_{\max}\) most profitable items can be
  calculated in total time \(\bo{n}\) by using
  \cite{DBLP:journals/jcss/BlumFPRT73}.  Calculating all values for
  \(f_{w{(i)}}(j)\) with \(i\in[D]\) and \(j\in t_{\max}\) takes time
  \(\bo{Dt_{\max}\log(t_{\max})}\).  Every convolution can be calculated in
  linear time by \Cref{prop:linear-concave-conv}.  Since the equivalence classes
  partition the set of positions, we can calculate \(R^i\) from \(R^{i - 1}\) in
  time \(d\cdot \bo{\max\set{V, t_{\max}\log(t_{\max})}}\) where the \(d\) factor is
  due to multidimensional indices.  We may assume \(\Delta \leq t_{\max}\) which shows
  the claimed running time.
\end{proof}

\subsection{Generalisation to ILPs}\label{section:ILP}
As we mentioned before, Knapsack is a special case of ILP. We now want to
discuss how to extend our results to ILPs.  In the case that all matrix entries
are non-negative ILP is equivalent to \knapsoneeq.  We discuss later how to
handle negative entries in the constraint matrix.  Negative item profits are
sensible in \knapsoneeq\ as those may be required to reach the capacity.  The
algorithm presented in \Cref{theo:algo} also works with negative profits.  The
running time of \Cref{theo:algo} depends on the capacity of the knapsack.  To
bound this value for ILPs we use the proximity bound by Eisenbrand and
Weismantel~\cite{eisenbrand} and improve their algorithmic approach.  We start
by giving a short summary of their approach.

For an ILP Instance with constraint matrix \(A\in\bZ^{d\times n}\), right side
\(b\in \bZ^d\), and upper bounds \(u\in\bN^n\), let
\(\Delta \coloneqq \norm{A}_{\max}\) be the biggest absolute value of an entry in the
constraint matrix.  Eisenbrand and Weismantel \cite{eisenbrand} calculate an
optimal solution $x^*$ for the LP-relaxation, which is possible in polynomial
time.  They proved that there exists an integral optimal solution $z^*$, such
that the distance in $\ell_1$-norm is small, more precisely
$\norm{z^* - x^*}_1 \le d \cdot (2d\Delta+1)^d$.  They then proceed to take
$\lfloor x^*\rfloor$ as an integral solution
(\(\norm{z^{*} - \floor{x^{*}}}_1\leq d\cdot(2d\Delta + 1)^d + d \eqqcolon L\)) and
construct the following ILP to find an optimal solution.
\begin{align*}
  \max c\tran{}x \text{ subject to} \\
  Ax &= b - A\lfloor x^* \rfloor\\
  \norm{x}_1 &\leq L\\
  \ell' &\leq x \leq u'\\
  x&\in\bZ^n
\end{align*}
Note that we may have \(\ell'_i < 0\) for some \(i\in[n]\).  Combining an optimal
solution to this ILP with \(\floor{x^{*}}\) yields an optimal solution to the
original ILP.  For any \(x\in\bZ^n\) with \(\norm{x}_1\leq L\) we have
\(\norm{Ax}_{\infty} \leq \Delta L \leq \bo{d\Delta}^{d+1}\) due to the bound on the
\(\ell_1\)-norm.  Eisenbrand and Weismantel construct an acyclic directed graph and
find the longest path in that graph which is equivalent to an optimal solution
to the ILP above.  They bound the size of the graph using the bound on the
\(\ell_{\infty}\)-norm above.  We avoid such a graph construction for our algorithm
which allows us to improve the running time.  For two \(i, j\in[n]\) with
\(A_i = A_j\) and \((c_i, i) > (c_j, j)\) we may assume that \(x^{*}_j > 0\)
implies \(x^{*}_i = u_i\).  Otherwise, we can modify \(x^{*}\) to obtain a
solution that adheres to this property with at least the same value.  This
structure allows us to reformulate the ILP by grouping variables together with
the same column in \(A\) as follows
\begin{align*}
  \max \sum\nolimits_{i\in[D]} f_i(x_i) \text{ subject to}\\
  A'x&= b - A\lfloor x^* \rfloor\\
  \norm{x}_1 &\leq L\\
  \ell'' &\leq x \leq u''\\
  x &\in\bZ^D
\end{align*}
where \(f_i\) is a concave function for all \(i\in[D]\).  We calculate an optimal
solution to this ILP using the convolution approach from \Cref{theo:algo}.
\ilpaxiotis*
\begin{proof}
  For \(i\in[D]_0\) let \(U^i\) be the best possible values using the variables
  \(x_j\) for \(j \leq i\).  The size of \(U^i\) in every dimension is
  \(2 \Delta L + 1\).  To calculate \(U^i_v\) from \(U^{i - 1}\) for some position
  \(v\in\bZ^d\) such that \(\norm{v}_{\infty}\leq \Delta L\) we need to calculate
  \[
    U^i_v = \max_{\substack{\ell''_i \leq j \leq u''_i\\\norm{v - jA'_i}_{\infty}\leq \Delta L}}U^{i - 1}_{v -
      jA'_i} + f_i(j).
  \]
  As in the proof of \Cref{theo:algo} we can consider the equivalence classes.
  Let \(S\) be the equivalence class of \(v\), it has the structure
  \(\set{v', v' + A'_i, v' + 2A'_i, \ldots, v' + (\abs{S} - 1)A'_i}\).  Define the
  sequence \((a_j)_{j\leq2\abs{S} - 2}\) with
  \(a_j \coloneqq U^{i - 1}_{v' + jA'_i}\) for \(j < \abs{S}\) and
  \(a_j \coloneqq \infty\) for \(j \geq \abs{S}\).  Further we define the sequence
  \((b_j)_{j \leq 2\abs{S} - 2}\) with \(b_j \coloneqq f_i(j - \abs{S} + 1)\).  We calculate
  the convolution of the sequences \(c \coloneqq a \oplus b\) in time
  \(\bo{\abs{S}}\) using \Cref{prop:linear-concave-conv}.  Then we get
  \begin{align*}
    U^i_{v' + jA'_i}
    &= \max_{\substack{\ell''_i \leq k \leq u''_i\\\norm{v' + jA'_i - kA'_i}_{\infty}\leq \Delta L}}U^{i - 1}_{v' + jA'_i - kA'_i} + f_i(k)\\
    &= \max_{j - \abs{S} + 1 \leq k \leq j}U^{i - 1}_{v' + (j - k)A'_i} + f_i(k)\\
    &= \max_{j\leq k \leq j + \abs{S} - 1 }U^{i - 1}_{v' + (j - k + \abs{S} - 1)A'_i} + f_i(k - \abs{S} + 1)\\
    &= \max_{0\leq k \leq \abs{S} - 1 }U^{i - 1}_{v' + (j - k - j + \abs{S} - 1)A'_i} + f_i(k + j - \abs{S} + 1)\\
    &= \max_{0\leq k \leq \abs{S} - 1 }U^{i - 1}_{v' + (\abs{S} - 1 - k)A'_i} + f_i(j + k - \abs{S} + 1)\\
    &= \max_{0\leq k\leq \abs{S} - 1} U^{i-1}_{v' + kA'_i} + f_i(j - k)\\
    &= \max_{0\leq k\leq \abs{S} - 1} U^{i-1}_{v' + kA'_i} + f_i(j - k + \abs{S} - 1 - \abs{S} + 1)\\
    &= \max_{0\leq k\leq \abs{S} - 1} a_k + b_{j - k + \abs{S} - 1}\\
    &= \max_{0\leq k\leq j +\abs{S} - 1} a_k + b_{j - k + \abs{S} - 1}\tag{\(a_k = -\infty\) for \(k \geq \abs{S}\)}\\
    &= c_{j + \abs{S} - 1}
  \end{align*}
  for \(j\in[\abs{S} - 1]_0\).
  Next discuss how to obtain the running time of \(\bo{d\Delta}^{d(d+1)}\) instead of
  the expected \(d\bo{d\Delta}^{d(d+1)}\) as in the proof of \Cref{theo:algo}.  To do
  this we add buffer zones of size \(\Delta\) at the start and the end of every
  dimension.  Then the size for every dimension is still bounded by
  \(\bo{d\Delta}^{d + 1}\).  Next we save the \(d\)-dimensional array as a
  \(1\)-dimensional array by concatenating the dimensions.  We mark the entries
  in the buffer zones.  For a fixed column we can calculate a step size in the
  \(1\)-dimensional array that corresponds to a step with the column in the
  \(d\)-dimensional array.  Using the marked buffer zones we can check if we
  stepped outside the valid area which allows us to find the equivalence
  class of a position in linear time of the size of the class.  Therefore, we
  can find all equivalence classes in time \(\bo{d\Delta}^{d(d+1)}\), and we can
  calculate the convolution in the same time.

  First we need to solve the linear relaxation of the ILP.  Denote the necessary
  time by \(T_{\mathrm{LP}}\).  Next we partition the columns by weight in time
  \(\bo{dn} + \Delta\).  We can find the \(L\) most profitable picked and unpicked
  items for every column type in total time \(\bo{n}\) using
  \cite{DBLP:journals/jcss/BlumFPRT73}.  With this we can calculate all relevant
  values for every \(f_i\) in time \(D\cdot L\log L\).  As explained above every
  convolution can be calculated in time \(\bo{d\Delta}^{d(d+1)}\) which yields the
  claimed running time.  An optimal solution vector can be calculated by
  backtracking in time \(D \cdot L\leq D\cdot d\cdot\bo{d\Delta}^{d+1}\)
\end{proof}

\section{Divide and Conquer Algorithm}\label{sec:div-conq}
In this section we show \Cref{prop:div-conq-alg}.  We start by showing the
central structural property.
\begin{lemma}[restate=structurelemma,name=]\label{prop:half-with-lbs}
  Let \(A\in\bZ^{d\times n}\), \(b\in\bZ^d\), and \(x, u\in\bN^d\).  If
  \(Ax = b\) and \(x \leq u\), there exists \(x'\in\bN^d\) such that
  \[
    \norm{2Ax' - b}_{\infty} \leq 2n\Delta, \quad x' \leq u' \coloneqq \floor[\Big]{\frac{u -
        1}{2}}\quad\text{and}\quad 0\leq x_i - 2x'_i \leq u_i - 2u_i' \leq 2 \text{ for
    } i\in[n].
  \]
\end{lemma}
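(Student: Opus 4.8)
The plan is to construct \(x'\) coordinate by coordinate by halving \(x\) and clamping against the new bound. For each \(i\in[n]\) I set \(x'_i \coloneqq \min\set{\floor{x_i/2}, u'_i}\). Then \(x'\in\bN^n\) and the middle requirement \(x'\le u'\) hold immediately by construction. (Here I use \(u'_i = \floor{(u_i-1)/2}\ge 0\), which holds once \(u_i\ge 1\); any coordinate with \(u_i=0\) forces \(x_i=0\) and may be dropped beforehand.)

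Next I would verify the remainder inequalities in the third constraint. Set \(s \coloneqq x - 2x'\). The clamp \(x'_i\le\floor{x_i/2}\) gives \(s_i = x_i - 2x'_i \ge x_i - 2\floor{x_i/2}\ge 0\) directly. The key quantity is \(r_i \coloneqq u_i - 2u'_i = u_i - 2\floor{(u_i-1)/2}\), which equals \(1\) when \(u_i\) is odd and \(2\) when \(u_i\) is even; in particular \(u_i - 2u'_i\le 2\), which settles the rightmost inequality. For \(s_i\le r_i\) I would distinguish two cases: if \(\floor{x_i/2}\le u'_i\) then \(x'_i=\floor{x_i/2}\) and \(s_i = x_i\bmod 2 \le 1 \le r_i\); otherwise \(x'_i=u'_i\) and \(s_i = x_i - 2u'_i \le u_i - 2u'_i = r_i\), using \(x_i\le u_i\). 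Together these give \(0\le x_i - 2x'_i\le u_i - 2u'_i\le 2\).

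The bound on \(\norm{2Ax'-b}_{\infty}\) then follows for free from the first two. Since \(b=Ax\), I have \(2Ax'-b = 2Ax'-Ax = -A(x-2x') = -As\). For every row \(j\), \(\abs{(As)_j}\le\sum_{i=1}^{n}\abs{A_{ji}}\,s_i\le \Delta\sum_{i=1}^{n} s_i\), and because each \(s_i\le r_i\le 2\) we obtain \(\sum_{i} s_i\le 2n\); hence \(\norm{2Ax'-b}_{\infty} = \norm{As}_{\infty}\le 2n\Delta\).

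The only genuinely delicate step is the remainder bound \(s_i\le r_i\); everything else is bookkeeping. The reason the clamped halving works is that the definition \(u'=\floor{(u-1)/2}\) is tuned precisely so that \(r_i = u_i - 2u'_i\in\set{1,2}\): this simultaneously guarantees that the discarded amount per coordinate is at most \(2\) (yielding both the small-remainder property and, through \(2Ax'-b=-As\), the \(2n\Delta\) bound) and that \(x'_i\) never needs to exceed \(u'_i\) in order to absorb \(x_i\). Had \(u'\) instead been \(\floor{u/2}\), an even \(u_i\) would force \(r_i=0\), and odd values of \(x_i\) could not be represented; so this particular tuning is exactly what makes all three requirements hold at once.
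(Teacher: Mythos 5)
Your proof is correct, and it follows the same overall strategy as the paper's: an explicit coordinate-wise halving of \(x\), direct verification of the three inequalities, and a triangle-inequality argument for the norm bound. The one genuine difference is the construction itself: you enforce \(x'\le u'\) by clamping, \(x'_i = \min\{\floor{x_i/2},\, u'_i\}\), whereas the paper shifts the halving according to the parity of \(u_i\) (taking \(\floor{x_i/2}\) for odd \(u_i\) and \(\floor{(x_i-1)/2}\) for even \(u_i\), with a special case at \(x_i=0\) to keep \(x'_i\) non-negative). The two constructions can yield different vectors (for \(x_i=2\), \(u_i=4\) you get \(x'_i=1\), the paper gets \(x'_i=0\)), but both satisfy all three required bounds, and your case split (clamp active vs.\ inactive) replaces the paper's parity split. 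What your version buys is that \(x'\le u'\) and \(x'\ge 0\) hold by construction, eliminating the paper's extra \(x_i=0\) case; what the paper's version buys is a formula for \(x'_i\) that does not reference \(u'_i\) in its definition, only in the verification. Finally, the degenerate case \(u_i=0\) that you flag (where \(u'_i=-1\) makes \(x'\le u'\) unsatisfiable over \(\bN\)) is a defect of the lemma statement itself and is passed over silently in the paper's proof, so your explicit handling of it is a small improvement rather than a gap.
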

\begin{proof}
  Let \(0 \leq x \leq u\) with \(Ax = b\).  We define the new solution \(x'\)
  component wise as follows.
  \[
    x'_i \coloneqq
    \begin{cases}
      0 & \text{if } x_i = 0,\\
      \floor{\frac{x_i}{2}} & \text{if } u_i \text{ is odd},\\
      \floor{\frac{x_i - 1}{2}} & \text{otherwise}
    \end{cases}
  \]
  for \(i\in[n]\).  We start by showing \(x' \leq u'\).  Let \(i\in[n]\).  If
  \(u_i\) is odd, we have
  \[
    x'_i = \floor[\Big]{\frac{x_i}{2}} \leq \frac{x_i}{2} \leq \frac{u_i}{2} = \frac{2u_i' +
    1}{2} = u'_i + \frac{1}{2} \implies x'_i \leq u'_i
  \]
  and
  \[
    0\leq x_i - 2x'_i = x_i - 2\floor[\Big]{\frac{x_i}{2}} \leq x_i -
    2 \frac{x_i - 1}{2} = 1 = u_i - 2u_i' \leq 2.
  \]
  Otherwise, if \(u_i\) is even, we have
  \[
    x'_i = \floor[\Big]{\frac{x_i - 1}{2}} \leq \frac{x_i - 1}{2} \leq \frac{u_i - 1}{2} =
    \frac{2u_i' + 1}{2} = u'_i + \frac{1}{2} \implies x'_i \leq u'_i
  \]
  and
  \[
    0 \leq x_i - 2x'_i = x_i - 2\floor[\Big]{\frac{x_i - 1}{2}} \leq x_i -
    2 \frac{x_i - 2}{2} = 2 = u_i - 2u_i' \leq 2.
  \]
  Next, let \(b' \coloneqq Ax'\).  We have
  \[
    \abs{2b'_j - b_j}
    = \abs[\Big]{\sum_{i=1}^n A_{j, i}(2x'_i - x_i)}
    \leq \sum_{i=1}^n\abs{A_{j, i}(2x'_i - x_i)}
    = \sum_{i=1}^n\abs{A_{j, i}}\abs{2x'_i - x_i}
    \leq 2n\Delta
  \]
  for every \(j\in[d]\) and thus \(\norm{2b' - b}_{\infty} \leq 2n\Delta\).
\end{proof}
This lemma implies that \emph{every} solution \(x\in\bN^n\) to \(Ax = b\) with
\(x\leq u\) can be decomposed into \(x',x''\in\bN^n\) such that
\(x = 2 x' + x''\), \(x'\) is a solution to \(Ax = b'\) and
\(x'' \leq u - 2u'\).  Note that \(u'\) does not depend on the concrete solution
but is the same for all.

We can iterate \Cref{prop:half-with-lbs} to obtain a series of solutions and
upper bounds until we reach a point where all upper bounds are zero.  Define
\(x^{(j)}\in\bN^n\) and \(u^{(j)}\in\bN^n\) after applying \Cref{prop:half-with-lbs}
\(j\) times, i.e., \(x^{(0)} = x\) and \(u^{(0)} = u\).  Let \(k\in\bN\) be smallest
value such that \(u^{(k)} = 0\).
\begin{corollary}\label{prop:iterate-halving}
  We have
  \(\norm[\big]{Ax^{(j)} - \frac{b}{2^j}}_{\infty} \leq n\Delta\sum_{i=1}^j \frac{1}{2^{i - 1}} \leq
  2n\Delta \) for \(j\leq k\).
\end{corollary}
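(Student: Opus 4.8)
The plan is to track the error vector $e_j \coloneqq Ax^{(j)} - b/2^j$ and to show that it obeys a simple contracting recursion with damping factor $\frac{1}{2}$. Writing $b^{(j)} \coloneqq Ax^{(j)}$, so that $b^{(0)} = Ax = b$, the only consequence of the structure lemma I would need is the per-step estimate obtained by iterating it: applying \Cref{prop:half-with-lbs} to the triple $(x^{(j-1)}, u^{(j-1)}, b^{(j-1)})$ --- with $b^{(j-1)} = Ax^{(j-1)}$ in the role of the right-hand side --- produces the next iterate $x^{(j)}$ satisfying $\norm{2b^{(j)} - b^{(j-1)}}_{\infty} \leq 2n\Delta$ for every $1\leq j\leq k$.

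First I would record the base case $e_0 = b^{(0)} - b = 0$. Then I would derive the recursion by doubling $e_j$ and inserting the telescoping term $b^{(j-1)}$:
\[
  2e_j = 2b^{(j)} - \frac{b}{2^{j-1}} = \bigl(2b^{(j)} - b^{(j-1)}\bigr) + \Bigl(b^{(j-1)} - \frac{b}{2^{j-1}}\Bigr) = \bigl(2b^{(j)} - b^{(j-1)}\bigr) + e_{j-1}.
\]
Taking $\ell_{\infty}$-norms and applying the triangle inequality together with the per-step bound above yields $\norm{e_j}_{\infty} \leq n\Delta + \frac{1}{2}\norm{e_{j-1}}_{\infty}$.

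From here the claim follows by a routine induction on $j\leq k$: the recursion unrolls to $\norm{e_j}_{\infty} \leq n\Delta \sum_{i=1}^{j} 2^{-(i-1)}$, which is exactly the stated expression $n\Delta\sum_{i=1}^j \frac{1}{2^{i-1}}$, and since $\sum_{i=1}^{j} 2^{-(i-1)} = 2(1 - 2^{-j}) < 2$ the final estimate $\norm{e_j}_{\infty} \leq 2n\Delta$ is immediate. The whole argument is bookkeeping around the geometric damping; the only point that needs care is the indexing --- making sure the lemma is applied with the \emph{previous} iterate's image $b^{(j-1)}$ as its right-hand side rather than the original $b$, so that the errors chain correctly. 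I do not expect any genuine obstacle beyond this.
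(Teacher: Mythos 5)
Your proof is correct and is essentially the same as the paper's: inserting the telescoping term $b^{(j-1)}/2 = Ax^{(j-1)}/2$, applying the triangle inequality with the per-step bound $\norm{2Ax^{(j)} - Ax^{(j-1)}}_{\infty} \leq 2n\Delta$ from \Cref{prop:half-with-lbs}, and unrolling the resulting recursion by induction is exactly the paper's argument, just phrased in terms of the error vector $e_j$.
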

\begin{proof}
  By induction.  \(j = 1\) is \Cref{prop:half-with-lbs}.  Let \(j > 1\).
  Then, we have
  \begin{align*}
    \norm[\Big]{Ax^{(j)} - \frac{b}{2^j}}_{\infty}
    &= \norm[\Big]{Ax^{(j)} - \frac{Ax^{(j - 1)}}{2} + \frac{Ax^{(j - 1)}}{2} - \frac{b}{2^j}}_{\infty}\\
    &\leq \norm[\Big]{Ax^{(j)} - \frac{Ax^{(j - 1)}}{2}}_{\infty} +
      \frac{1}{2}\norm[\Big]{Ax^{(j - 1)} - \frac{b}{2^{j - 1}}}_{\infty} \\
    &\leq n\Delta + \frac{n\Delta}{2}\sum_{i=1}^{j - 1}\frac{1}{2^{i-1}}
    = n\Delta + n\Delta\sum_{i=2}^{j}\frac{1}{2^{i-1}}
    = n\Delta\sum_{i=1}^{j}\frac{1}{2^{i-1}}.\qedhere
  \end{align*}
\end{proof}
With this we can give the algorithm and proof \Cref{prop:div-conq-alg}.

\begin{figure}[tb]
  \centering
  \begin{tikzpicture}[ell/.style={ellipse, draw, minimum height=1cm, minimum
      width=.75cm}, node distance=.75cm, >={Stealth[round]}]
    \node[ell] (k0) {\(0\)};
    \node[ell, right=of k0] (k1) {\(1\)};
    \node[right=of k1] (kd) {\(\cdots\)};
    \node[ell, right=of kd] (kn) {\(n\)};
    \draw[decoration=brace, decorate] ([yshift=1.2cm]k0.west) -- node[above,
    yshift=.2cm] {\(k - 1\)} ([yshift=1.2cm]kn.east);

    \node[ell, right=1.5cm of kn] (kk0) {\(0\)};
    \node[ell, right=of kk0] (kk1) {\(1\)};
    \node[right=of kk1] (kkd) {\(\cdots\)};
    \node[ell, right=of kkd] (kkn) {\(n\)};
    \draw[decoration=brace, decorate] ([yshift=1.2cm]kk0.west) -- node[above,
    yshift=.2cm] {\(k - 2\)} ([yshift=1.2cm]kkn.east);

    \node[below=.3cm of kk0] (dds) {\(\cdots\)};
    \fill[path fading=west] ([yshift=.2pt]dds.east) rectangle ++(1, -.4pt);
    \draw ([xshift=1cm]dds.east) .. controls +(2, 0) and +(0, -.65)
    .. node[font=\small, below, pos=.25] {double} (kkn.south);
    \fill[path fading=east] ([yshift=.2pt]dds.west) rectangle ++(-1, -.4pt);

    \node[ell, above left=-3cm and 1.5cm of kk0] (00) {\(0\)};
    \node[ell, right=of 00] (01) {\(1\)};
    \node[right=of 01] (0d) {\(\cdots\)};
    \node[ell, right=of 0d] (0n) {\(n\)};
    \draw[decoration=brace, decorate] ([yshift=-.75cm]0n.east) -- node[below, yshift=-.2cm] {\(0\)} ([yshift=-.75cm]00.west);

    \draw[->] ([xshift=-1cm]dds.west) -- ++(-1.075, 0) .. controls +(-1, 0) and
    +(-1.5, 0) .. node[left, font=\small] {double} (00);

    \draw[->] (k0) -- node[above, font=\small, yshift=.5cm] {\(u_1^{(k - 1)} - 2u_1^{(k)}\)} (k1);
    \fill[path fading=east] ([yshift=.2pt]k1.east) rectangle ([yshift=-.2pt]kd.west);
    \fill[path fading=west] ([yshift=.2pt]kd.east) rectangle ([yshift=-.2pt]kn.west);
    \draw[->, shorten <=.65cm] (kd) -- (kn);
    \draw[->] (kn) -- node[below, sloped, font=\small] {double} (kk0);

    \draw[->] (kk0) -- node[above, font=\small, yshift=.5cm] {\(u_1^{(k - 2)} - 2u_1^{(k - 1)}\)} (kk1);
    \fill[path fading=east] ([yshift=.2pt]kk1.east) rectangle ([yshift=-.2pt]kkd.west);
    \fill[path fading=west] ([yshift=.2pt]kkd.east) rectangle ([yshift=-.2pt]kkn.west);
    \draw[->, shorten <=.65cm] (kkd) -- (kkn);

    \draw[->] (00) -- node[above, font=\small, yshift=.5cm] {\(u_1^{(0)} - 2u_1^{(1)}\)} (01);
    \fill[path fading=east] ([yshift=.2pt]01.east) rectangle ([yshift=-.2pt]0d.west);
    \fill[path fading=west] ([yshift=.2pt]0d.east) rectangle ([yshift=-.2pt]0n.west);
    \draw[->, shorten <=.65cm] (0d) -- (0n);
  \end{tikzpicture}
  \caption{Structure of the graph in the proof of \Cref{prop:div-conq-alg}.}\label{fig:ex:graph}
\end{figure}
\divConqAlg*
\begin{proof}
  We find an optimal solution to the bounded ILP by finding the longest path in
  a graph that is constructed based on \Cref{prop:half-with-lbs}.  We define the
  graph as follows.
  \[
    V \coloneqq \set[\Big]{(b', j, i)}[j\in[k]_0, \norm[\Big]{b' - \frac{b}{2^j}}_{\infty} \leq
    2(2n - i)\Delta, i\in[n]_0]
  \]
  The intuition here is that we have a layer for every application of
  \Cref{prop:half-with-lbs}.  We can jump from one layer to the next by doubling
  the corresponding right side (the first component) and length of the path to
  the current vertex.  Then, to reconstruct the solution prior to an application
  of \Cref{prop:half-with-lbs} we may need to increase variables by up to two.
  For this we have the third component.  While going to a vertex with
  \(i\in[n]\) in the last component we may use the \(i\)th column up to the number
  of times it was removed to get the new upper bound.  This is also an upper
  bound for the number of times it was removed from a solution as stated in
  \Cref{prop:half-with-lbs}.  More precisely, we associate the ILP
  \begin{align*}
    \max c\tran{}x & \\
    Ax &= b'\\
    x_{\ell} &\le u^{(j)}_{\ell} \tag{for \(\ell\in[i]\)}\\
    x_{\ell} &\leq 2u^{(j + 1)}_{\ell} \tag{for \(\ell > i\)}\\
    x&\in\bN^n
  \end{align*}
  with a vertex \((b', j, i)\in V\) and refer to it by
  \(\mathrm{ILP}_{(b', j, i)}\).  A path from \((0, k - 1, 0)\) to
  \((b', j, i)\) describes a solution to \(\mathrm{ILP}_{(b', j, i)}\) where
  the value is equal to the length of the path.

  Next we give the precise definitions for the edges in the graph.  For a vertex
  with \(j < k\) and \(i < n\) we define the following edges
  \[
    (b', j, i) \xrightarrow{xc_{i + 1}} (b' + xA_{i + 1}, j, i + 1)
  \]
  with \(x\in[u_{i + 1}^{(j)} - 2u^{(j + 1)}_{i + 1}]_0\subseteq[2]_0\) and for vertices with
  \(j > 0\) and \(i = n\) we define the  edge
  \[
    (b', j, n) \xrightarrow{\text{double the length}} (2b', j - 1, 0).
  \]
  Because of the changes in the second and third component the graph is acyclic,
  and thus the longest path can be calculated in linear time by utilizing a
  topological order of the graph.  Note that the second type of edge does not have
  a classic length, but this doubling of the current length works with the
  aforementioned algorithm.  The out degree of every vertex is bounded by \(3\)
  and thus the size of graph is linear in the number vertices which is
  \(n^{d + 1} \cdot \bo{\Delta}^d \cdot \log(\norm{u}_{\infty})\).  Now we find the longest path
  from \((0, k - 1, 0)\) to \((b, 0, n)\).  The only solution to
  \(\mathrm{ILP}_{(0, k-1, 0)}\) is \(\constvec{0}\).  The structure of the
  graph is depicted in \Cref{fig:ex:graph}.

  Let \(x\in\bN^n\) be a solution to the ILP.  Further, let
  \(x^{(j)}\in\bN^n\) be the solutions obtained by iterating
  \Cref{prop:half-with-lbs}.  Define
  \[
    x^{(j, i)} \coloneqq
    \begin{cases*}
     2x^{(j + 1)} & if \(i = 0\),\\
     x^{(j, i - 1)} + \paren[\big]{x^{(j)}_i - 2x^{(j + 1)}_i}e_i& otherwise
    \end{cases*}
  \]
  for \(j\in[k - 1]_0\) and \(i\in[n]_0\) where \(e_i\) is the \(i\)th unit vector.  We have
  \(\norm[\big]{Ax^{(j, i)} - \frac{b}{2^{j}}}_{\infty} \leq 2(2n - i)\Delta \)
  by \Cref{prop:half-with-lbs,prop:iterate-halving}.  Thus,
  \begin{gather*}
    (0 = Ax^{(k - 1, 0)}, k - 1, 0), (Ax^{(k - 1, 1)}, k - 1, 1), \ldots, \\
    \quad (Ax^{(k - 1, n)}, k - 1, n),(Ax^{(k - 2, 0)}, k - 2, 0), \ldots, (b = Ax^{(0, n)}, 0, n)
  \end{gather*}
  is a path in the graph with length \(c\tran{} x\).  Therefore, ILP has a
  solution if and only if there is a path in the graph from \((0, k - 1 , 0)\)
  to \((b, 0, n)\).  Also the value of the optimal solution to the ILP is equal
  to the length of the longest path in the graph.
\end{proof}

\section{Conclusion}
First, we have proven that the relationship between Knapsack and Convolution in
the one-dimensional case also prevails in higher dimensional variants.  A natural
follow-up question is how many more problems can be shown to be equivalent to
\(d\)-dimensional convolution.

We developed a parameterized algorithm for multidimensional knapsack that generalized
techniques for one-dimensional knapsack.  Further, we used these techniques to
obtain a faster algorithm for Integer Linear Programming with upper bounds.

Finally, we gave an improved algorithm for Integer Linear Programming with upper
bounds that avoids the quadratic dependency on the dimension by increasing the
dependency on the number of variables.

\bibliography{lib}

\appendix
\section{Omitted Proofs of Reductions of Section \ref{sec:redus}}\label{sec:omitted-redus}
\redbinenc*
\begin{proof}
We denote the $i$th item by the pair $(w^{(i)},p_{i})$ where $w^{(i)} \in \mathbb{N}^d$  is the weight vector and $p_i\in \mathbb{R}_{\ge 0}$ is the profit.
Consider an instance for the unbounded Knapsack problem and construct a 0/1-Knapsack instance as follows. For each item $(w^{(i)},p_{i})$ add items $(2^jw^{(i)},2^j p_{i})$ for all $j$ such that $2^j w^{(i)} \le t $.

Without loss of generality, we may assume that all $n$ weight vectors are pairwise different. If two weight vectors are the same, we would take the item with the higher profit. For each of these items in the unbounded instance, we then have at most $\lfloor \log (t_{max}) \rfloor +1$ items in the constructed bounded instance.

Assume now that in the unbounded instance some item $i$ is taken $k$ times in some solution. This choice can be replicated by taking the copies from $i$ that make up the binary representation of $k$. With this, we can convert any packing of either instance to a packing of the other. Therefore, both instances are equivalent.
\end{proof}


\begin{definition}
  For a $d$-dimensional array $A$ with size $L\in \mathbb{N}^d$, we say that
  $A$ is monotone increasing if for any positions $\nullvec{} \le v,u<L$ we have
  that $v\le u$ implies that $A_v \le A_u$.
\end{definition}
When we test for superadditivity, which is merely testing whether an array $A$
is an upperbound for the convolution of $A\oplus A$, it is helpful if we can work
with this simpler structure.

\begin{lemma}\label{lem:AddMon}
  For every $d$-dimensional array $A$ of size $L\in \mathbb{N}^d$, there is an array
  $A'$ with the same size that is non-negative and monotone increasing such that
  $A$ is superadditive iff $A'$ is superadditive. The values in this new array
  may increase by a factor of $\mathcal{O}(d L_{max})$.
\end{lemma}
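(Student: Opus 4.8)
The plan is to produce $A'$ from $A$ by adding a \emph{linear} function of the position, which is exactly the kind of perturbation that leaves superadditivity invariant. Writing the entries of $A$ as lying in $[-W,W]$, I would first set $B_v := A_v + \lambda\sum_{i=1}^{d} v_i$ with slope $\lambda := 2W+1$. The reason for linearity is that for any decomposition $v = u + (v-u)$ the added term splits as $\lambda\sum_i u_i + \lambda\sum_i(v-u)_i = \lambda\sum_i v_i$, so after cancelling the linear parts the inequality $B_v \ge B_u + B_{v-u}$ is \emph{literally} the same as $A_v \ge A_u + A_{v-u}$. Hence $B$ is superadditive iff $A$ is. The same slope forces monotonicity: if $v \le u$ with $v \ne u$ then $\sum_i u_i - \sum_i v_i \ge 1$, so $B_u - B_v \ge \lambda - 2W \ge 1 \ge 0$. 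Finally, along the main diagonal the linear term grows by at most $\lambda\sum_i (L_i - 1) = \mathcal{O}(W d L_{\max})$, so all entries of $B$ stay within a range of size $\mathcal{O}(W d L_{\max})$, which is precisely the claimed blow-up factor $\mathcal{O}(d L_{\max})$.

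The second step is to enforce non-negativity. For $v \ne \nullvec{}$ we already have $B_v = A_v + \lambda\sum_i v_i \ge -W + \lambda \ge W+1 > 0$, so the \emph{only} possibly-negative entry is $B_{\nullvec{}} = A_{\nullvec{}}$. I would therefore define $A'_{\nullvec{}} := 0$ and $A'_v := B_v$ for $v \ne \nullvec{}$; since we only raise the minimum entry to a value still strictly below every other entry, $A'$ stays monotone and is now non-negative. For the equivalence I would use that a superadditive array must satisfy $A_{\nullvec{}} \le 0$, because the condition at $v = \nullvec{}$ reads $A_{\nullvec{}} \ge 2A_{\nullvec{}}$. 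Assuming $A_{\nullvec{}} \le 0$, raising the corner to $0$ only changes the convolution terms with $u \in \set{\nullvec{}, v}$, and those become exactly $A'_v$, while all interior terms are unchanged and controlled by the superadditivity of $B$; a short check in both directions then gives that $A'$ is superadditive iff $B$, and hence iff $A$, is.

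The delicate point, which I expect to be the main obstacle, is that the corner value $A_{\nullvec{}}$ is precisely the feature that non-negativity erases: if $A_{\nullvec{}} > 0$ then $A$ is \emph{not} superadditive, yet forcing $A'_{\nullvec{}} = 0$ could accidentally yield a superadditive $A'$ whenever the interior inequalities happen to hold, breaking the equivalence. I would resolve this by a cheap case distinction before the construction. The test $A_{\nullvec{}} > 0$ is an $\mathcal{O}(1)$ certificate that $A$ is not superadditive, and in that case I simply output the array with $A'_v := 1$ for every position $v$: it is non-negative and monotone, and it already fails superadditivity at $\nullvec{}$ since $1 \ge 2$ is false, so it correctly matches $A$ being non-superadditive (and its entries are bounded by $W$, since $A_{\nullvec{}}>0$ forces $W\ge 1$). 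Otherwise $A_{\nullvec{}} \le 0$ and the linear-shift-and-clamp construction above applies verbatim. In both cases $A'$ is non-negative and monotone increasing, superadditive iff $A$ is, and has entries inflated by at most $\mathcal{O}(d L_{\max})$, as required.
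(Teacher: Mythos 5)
Your proposal is correct and takes essentially the same route as the paper: both constructions add the linear term $c\cdot\norm{v}_1$ with $c$ roughly $2W+1$ (the paper uses $c = 2\max_{v}\abs{A_v}+1$), exploiting that $\norm{u}_1+\norm{v}_1=\norm{u+v}_1$ makes the linear parts cancel in every superadditivity inequality while forcing monotonicity, non-negativity away from the corner, and the $\mathcal{O}(dL_{\max})$ value blow-up. The only difference is the corner case you rightly flag as delicate: you branch on $A_{\nullvec{}}>0$ and output a trivial non-superadditive constant array, whereas the paper handles it uniformly by setting $A'_{\nullvec{}} \coloneqq \max\{0, A_{\nullvec{}}\}$, so a positive corner stays positive and $A'$ fails superadditivity at $\nullvec{}$ exactly when $A$ does — both resolutions are valid.
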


\begin{proof}
Initially, set the entry at the zero vector position as $A'_{\nullvec{}}\coloneqq \max\{0, A_{\nullvec{}}\}$. Note that if $A_{\nullvec{}} > 0$ then $A$ cannot be superadditive since already $A_{\nullvec{}} + A_{\nullvec{}} > A_{\nullvec{}}$. Set $c \coloneqq 2 \max_{v\le L}{|A_v|}+1$ and with that define $A'_v \coloneqq A_v + c \cdot ||v||_1$ for all $v < L$.

We can immediately conclude that through the setting of $c$ we have that $A_v +c > 0$, hence $A'$ has no negative values. The monotony follows from the fact that $|A_v - A_{v+e^{(i)}}|<c $ for any unit vector $e^{(i)}$ and with that we have
\begin{equation*}
A'_{v} - A'_{v+e^{(i)}} = A_v +\norm{v}_1 c  - ( A_{v+e^{(i)}}+\norm{v+e^{(i)}}_1 c )= A_v - A_{v+e^{(i)}} - c < c- c = 0.
\end{equation*}
We now show the equivalence of superadditivity between the two arrays for any positions $u,v \neq \nullvec$ such that $u+v$ is a feasible position in the array $A$. Remember that the positions are non-negative, so $\norm{u+v}_1 = \norm{u}_1 + \norm{v}_1 $ and therefore:

\begin{align*}
A'_v + A'_u &\le A'_{v+u}  \\
 \Leftrightarrow A_v + c \cdot ||v||_1 + A_u + c \cdot ||u||_1 &\le A_{v+u} + c \cdot ||v+u||_1  \\
\Leftrightarrow A_v + A_u &\le A_{v+u}
\end{align*}
Consider now the case that one vector is actually the null-vector. Without loss of generality assume that $u = \nullvec$ and let $v$ be arbitrary, but such that $u+v$ is still a feasible position. We then have that $A'_u = 0$ if and only if $A_u \le 0$. The equality follows then immediately:

\begin{align*}
A'_v + A'_u &=A'_{v+u} = A'_{v}  \\
\Leftrightarrow A_v + A_u &\le A_{v+u} =A_{v}
\end{align*}
Let $L$ be the size of $A$ and remember that our input values through our
initial assumption are all in $[-W,W]$. Let $A'_p$ with $p= L-\constvec{1}$ be
the largest entry in $A'$. We note that the values of our new array may grow up
to
\begin{equation*}
A'_p = A_p + c \cdot \norm{p}_1 \le W + (2W+1) \cdot \norm{L-\constvec{1}}_1 \in
\mathcal{O}(||L||_{1}W) \leq \mathcal{O}(d L_{max} W). \qedhere
\end{equation*}
\end{proof}
Both properties, monotony and non-negative values, play an important role in the
reduction to \knapsunb, where we translate array positions to knapsack items.

\redprimdual*
\begin{proof}
  Let $A$ be an input array with size $L$ and assume thanks to
  \autoref{lem:AddMon} that $A$ is non-negative and monotone increasing. This
  might cause larger entries in the array that might incur additional overhead.
  We will account for this by adding a factor $\polylog(d L_{max})$ to the
  overall running time.

  Let $D \coloneqq \norm{L}_1 \cdot \max_{v< L}{A_v} + 1$. Set $t \coloneqq 2L$ as the new capacity
  for our knapsack instance. For every position $v< L$ in $A$, we define two
  items for our Knapsack instance. The first item, the primal item of position
  $v$, will have weight vector $w^{(v)} = v$ and profit $p_v = A_v$. The other
  item, the dual item of position $v$ will have weight vector
  $\overline w^{(v)} = t-v$ and profit $\overline p_v = D- A_v$.

  We can see that the profit $D$ is easily achievable for this Knapsack instance
  as any combination of primal and dual item for some position will yield a
  feasible packing that fills out the whole knapsack. We further argue that $D$
  is the maximum achievable profit if and only if $A$ is a superadditive array.

 First consider the case where $A$ is not superadditive, which means for some  $u,v< L$ we have that $A_v + A_u > A_{u+v}$. Combine now the primal items for positions $u,v$ with the dual item for position $u+v$. By definition of the weights the total weight of this packing is $t$. The profit of this packing then sums up to: $D-A_{v+u} + A_v + A_u > D$.

Consider now the case of $A$ being superadditive. We first argue that any knapsack solution may hold at most one dual item. Consider therefore any $1 \le i \le d$ and note that $(t-v)_i >2L_i -L_i =L_i$ for any position $v$. Therefore, adding two dual items would break all capacity constraints. Further, any optimal solution has to contain a dual item.

We may assume that no item has weight $\nullvec{}$, because these items can be removed and automatically added to the knapsack. This also means that there can be at most $||L||_1$ primal items in the knapsack and that the maximum profit of a packing containing only primal items is limited to $||L||_1 \cdot \max_{v< L}{A_v} < D$.

Let $S$ now be an optimal solution containing the dual item for position $v$. Assume further that $S$ contains multiple primal items and take two of these primal items for positions $u$ and $w$. Due to the superadditivity of $A$, we have that $A_u+A_w \le A_{u+w}$, so we can replace both of these primal items with the primal item of position $u+w$. This will not change the weight of the knapsack and only increase its profit, so the solution stays optimal. By doing this replacement over and over, we end up with an optimal packing that contains one dual item for position $v$ and only one primal item for some position $x$.

Note that we must have that $x\le v$ because otherwise the weight constraint would be broken and since we never changed the weight of the knapsack, this would have been true for the supposedly optimal solution $S$ as well. Since $A$ is monotone, we have that $A_x\le A_v$, so we can yet again replace the primal item for position $x$ with the primal item of position $v$ and end up with $A$ packing of profit $D$, which has to be optimal.
\end{proof}

\redblockmatrix*
\begin{proof}
  Let $A,B,C$ be the input matrices with the same size $L$. Let $L_1$ be the
  first entry of $L$. We will construct a new array $M$ of size $L'$ with
  $L'_1 \coloneqq L_1\cdot4$ and $L'_i = L_i$ for $1 < i \le d$, where we append each array in
  the first dimension. This means the previous three arrays appear next to
  each other preceded by a default block with negative values and $0$ in the
  $\nullvec$ position.  We formally define this array $M$ blockwise and address
  the four blocks of $M$ via index sets. For $0\le i \le 3$, denote with
  $P_i \coloneqq e^{(i)} \cdot i L_1$ the point where each array starts in the compound
  array (i.e. this will refer to the $\nullvec{}$ position in the original
  array) and denote the index set of each array as
  $I_i \coloneqq \{v < L' | P_i \le v < P_i + L\}$.

Let $K$ be large enough i.e. $K = 1+ 2\max_{v<L}\{|A_v|,|B_v|,|C_v|\}$, then define for $v < L'$:
\begin{figure}
\centering
\begin{tikzpicture}[scale=0.3]
\draw (0,0) rectangle (1,1) node[pos=.5] {$0$};
\draw (0,0) rectangle (6,6) node[pos=.5] {$-10K$};
\draw (6,0) rectangle (12,6) node[pos=.5] {$A+K$};
\draw (12,0) rectangle (18,6) node[pos=.5] {$B+4K$};
\draw (18,0) rectangle (24,6) node[pos=.5] {$C+5K$};

\node[below] at (0, 0) {$0$};
\draw[left] (0, 6) node {$L_2$};
\draw[below] (6, 0) node {$L_1$};
\draw[below] (12, 0) node {$2 L_1$};
\draw[below] (18, 0) node {$3 L_1$};
\draw[below] (24, 0) node {$4 L_1 -1$};
\end{tikzpicture}
\caption{Structure of the constructed array for $d=2$. Note that the annotation
  $X+s$ for an array $X$ and some value $s$, denotes the array where
  $(X+s)_v = X_v+s$ for every position $v$.}
\end{figure}
\[ M_v = \begin{cases}
        0 &\quad \text{ if } v\in I_0 \text{ and } v = \nullvec\\
				- 10K &\quad \text{ if } v\in I_0 \text{ and } v \neq \nullvec\\
				K + A_u &\quad \text{ if } v\in I_1 \text{ and } v = 1\cdot(L_1+1) + u\\
				4K + B_u &\quad \text{ if } v\in I_2 \text{ and } v = 2\cdot(L_1+1) + u\\
				5K + B_u &\quad \text{ if } v\in I_3 \text{ and } v  = 3\cdot(L_1+1) + u\\
     \end{cases}
\]
We now argue that $M$ being superadditive is equivalent to $C$ being an upper bound for the convolution of $A$ and $B$.
To see that, we make a case distinction over all $u,v < L'$ such that $u+v < L'$ and compare the respective entries in $M$. Note that we will skip symmetrical cases that appear by swapping the roles of $v$ and $u$.

Case 1: $v = \nullvec{}$.
We now have that $M_v + M_u = M_u = M_{u+v}$, hence the superadditivity property is always fulfilled.

Case 2: $v,u \in I_0\backslash\nullvec{}$.
In this case we get that $M_v + M_u = -20 K < -10K \le M_{u+v}$ for all $u$.

Case 3: $v \in I_0\backslash\nullvec{}$ and $u\in I_i$ for $i\ge 1$.
It follows now $M_v + M_u <  0 <  M_{u+v} $ since $u+v \in I_i$ for $i\ge 1$ and $M_v = -10$.

Case 4: $v, u\in I_1$.
Since two entries from $A$ added up will be smaller than $K$, we have that $M_v + M_u <  3K \le   M_{u+v} $ since $u+v \in I_i$ for $i\ge 2$.

Case 5: $v\in I_1$ and $u\in I_2$.  Note that $u+v \in I_3$ and set $v',u',w'$ such
that $M_v = K + A_{v'}$, $M_u = 4K + B_{u'}$ and $M_{u+v} = 5K +
C_{w'}$. Through definition we additionally have that $v' +u' = w'$.  We
conclude now that $M_v + M_u = K + A_{v'}+ 4K + B_{u'} = 5K + A_{v'}+ B_{u'}$
and with that we have
$M_v + M_u \le M_{u+v} \Leftrightarrow A_{v'}+ B_{u'} \le C_{w'}$ and hence in this case the
superadditivity property is fulfilled iff $C$ is an upper bound for the
convolution of $A$ and $B$.
\end{proof}

\begin{figure}
\centering
\begin{tikzpicture}[scale=0.6]

\node[above] at (4,3) {First Test};
\draw[fill=cyan] (0,0) rectangle (4,3);
\draw[] (0,0) rectangle (8,3);

\node[below] at (0,-0) {$0$};
\node[left] at (-0,3) {$L_2$};
\node[below] at (8,-0) {$L_1$};

\begin{scope}[xshift= -6cm, yshift=-4.5cm]

\node[above] at (4,3) {Second Test (First test returned false)};
\draw (0,0) rectangle (8,3);
\draw[fill=cyan] (0,0) rectangle (2,3);
\draw[pattern=north west lines] (4,0) rectangle (8,3);

\node[below] at (0,0) {$0$};
\node[left] at (0,3) {$L_2$};
\node[below] at (8,0) {$L_1$};

\end{scope}

\begin{scope}[xshift= +6cm, yshift=-4.5cm]

\node[above] at (4,3) {Second Test (First test returned true)};
\draw (0,0) rectangle (8,3);
\draw[fill=cyan] (0,0) rectangle (6,3);
\draw[pattern=north west lines] (0,0) rectangle (4,3);

\node[below] at (0,-0) {$0$};
\node[left] at (0,3) {$L_2$};
\node[below] at (8,-0) {$L_1$};

\end{scope}

\end{tikzpicture}
\caption{Sketch of the binary search. The colored area is the section from $A,B,C$ we apply the oracle on. The hatched part shows the part, that we do not specifically test for, because we know there is a faulty position in the unhatched part.} \label{fig:matrixbinarysearch}
\end{figure}

\begin{lemma}\label{lem:findUpperBoundConflict}
Consider a $T(\Pi(L),d)$ algorithm for \maxconvup. Let $A,B,C$ be three $d$-dimensional arrays of size $L$. We can then in time $\mathcal{O}(T(\Pi(L),d) \cdot d \cdot \log(L_{max}))$ find a position $v<L$ such that $C_v < (A \oplus B)_v$ or confirm that no such position exists.
\end{lemma}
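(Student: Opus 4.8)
The plan is to reduce the search problem to $\mathcal{O}(d\log L_{\max})$ invocations of the decision oracle for \maxconvup{} by a coordinatewise binary search. First I would call the oracle once on $(A,B,C)$; if it reports that the upper bound holds everywhere, I output that no violating position exists. Otherwise a violating position is guaranteed to exist and the remaining task is to localize one. The crucial structural observation is that convolution values are \emph{self-contained on prefix boxes}: for any $m\in\bN^d$ and any $v$ with $\nullvec\le v\le m$, the value $(A\oplus B)_v = \max_{u\le v}A_u+B_{v-u}$ depends only on entries $A_u$ and $B_{v-u}$ with $u\le v\le m$ and $v-u\le v\le m$. Hence if I truncate all three arrays to the prefix box $\set{v}[\nullvec\le v\le m]$ (an array of size $m+\constvec{1}$) and run the oracle, it correctly decides whether a violating position $v\le m$ exists. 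I expect this to be the main obstacle of the reduction: the same is \emph{not} true for an arbitrary sub-box — truncating directly to a ``right half'' of some coordinate would drop contributions from smaller indices and compute the convolution incorrectly — which forces the search to probe prefix boxes only.

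With this observation in hand, I would locate the lexicographically smallest violating position one coordinate at a time. For the first coordinate I binary-search for the smallest $m_1$ such that the prefix box $[0,m_1]\times[0,L_2-1]\times\cdots\times[0,L_d-1]$ contains a violation; since this predicate is monotone in $m_1$, at most $\lceil\log L_1\rceil$ oracle calls suffice. Once $m_1$ is pinned, minimality guarantees that no violation has first coordinate below $m_1$, so every violation inside a further-shrunk prefix has first coordinate exactly $m_1$. I then binary-search the smallest $m_2$ for which $[0,m_1]\times[0,m_2]\times[0,L_3-1]\times\cdots$ contains a violation, and continue analogously through all $d$ coordinates. To probe beyond the currently cleared left part of a coordinate I extend the prefix rather than truncating to a right half, exploiting the already-established absence of violations in the cleared region, exactly as illustrated in \Cref{fig:matrixbinarysearch}.

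It remains to argue that $v^\ast\coloneqq(m_1,\ldots,m_d)$ is itself a violating position. The final prefix box $[0,m_1]\times\cdots\times[0,m_d]$ contains a violation $v$, while minimality of each $m_i$ forbids any violation in the box obtained by decreasing the $i$-th bound to $m_i-1$. Since $v_i\le m_i$ for all $i$, if some $v_i$ were strictly smaller than $m_i$ then $v$ would lie in that forbidden box (its other coordinates being bounded by the respective $m_j$ or $L_j-1$), a contradiction. Therefore $v_i=m_i$ for every $i$, i.e. $v=v^\ast$, and $C_{v^\ast}<(A\oplus B)_{v^\ast}$.

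Finally, for the running time the procedure performs one initial oracle call plus $\sum_{i=1}^d\lceil\log L_i\rceil\le d\lceil\log L_{\max}\rceil$ further calls, each on a truncated instance of size at most $L$ and hence costing $T(\Pi(L),d)$, with an additional $\mathcal{O}(\Pi(L))$ per call to assemble the truncated arrays. Since any algorithm must at least read its input we may assume $T(\Pi(L),d)=\Omega(\Pi(L))$, so the assembly cost is absorbed and the total is $\mathcal{O}(T(\Pi(L),d)\cdot d\cdot\log(L_{\max}))$, as claimed. The only genuinely delicate point is the prefix-only constraint on what the oracle can legitimately test, established in the first paragraph and then discharged by the monotone coordinatewise search; everything else is routine bookkeeping.
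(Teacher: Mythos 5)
Your proposal is correct and follows essentially the same route as the paper: a coordinatewise binary search over prefix sub-arrays, using the \maxconvup{} oracle as a decision procedure, for a total of $\mathcal{O}(d \log L_{\max})$ oracle calls. The only difference is bookkeeping — the paper pins each found coordinate by overwriting $C$ with $\infty$ outside the slice $p_i = u_i$ before searching the next dimension, whereas you keep nested prefix boxes and recover the violating position from minimality of each $m_i$; both rest on the same prefix self-containment property of the truncated convolution, which you state explicitly and the paper uses implicitly.
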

\begin{proof}
Without loss of generality, we will assume that a desired position $v<L$ such that $C_v < (A \oplus B)_v$ exists. We can simply apply the algorithm for \maxconvup{} and if no such $v$ exists, then the algorithm will immediately return true and we are done.

We will argue that we can find $v$ in the desired time via multiple binary searches.
Fix a dimension $1\le i\le d$ and then do a binary search as follows in order to find an index $0\le u_i < L_i$ such that there is a position $v$ that we desire with $v_i = u_i$.

We first apply our algorithm to instances that have only half the amount of
entries i.e. that is halved in dimension $i$. To be more precise, we take the
sub-arrays from $A,B,C$ each of size $L - e^{(i)}\cdot \lfloor L_i /2\rfloor$. When the test
returns false, then there is a position in the first half of the array and we
can continue the binary search in there. Should the test return true, so for the
tested part we indeed have an upper bound property, then we need to look in the
half we did not test for the contradicting position. Note however that we still
need to include the already verified array parts to make sure all relevant
combinations are tested. See \autoref{fig:matrixbinarysearch} for an
illustration.

To find an actual position that violates the upper bound property, we use this binary search in the first dimension to find $u_1$. We know that there is a conflicting position $v$ with $v_1=u_1$. Before we apply the same approach in the second dimension however, we modify $C$ by setting any $C_p$ for $p<L$ with $p_1 \neq u_1$ to a sufficiently large value, i.e. $\infty$.

With this modification, we will make sure that no position $p<L$ with $p_1 \neq u_1$ will cause the \maxconvup{} test to fail. This may remove potential candidates for a conflicting position $v$ but as we found $u_1$, at least one position will remain. Now we can do the binary search for the second dimension and get $u_2$ with the conclusion that there is a conflicting position $v$ with $v_1=u_1$ and $v_2=u_2$.

We repeat this process, the binary search and respective modification of $C$,
for each dimension. By induction, we can conclude, that the resulting position
$u$ is in fact a conflicting position. Each application of the algorithm takes
time $T(\Pi(L),d)$. The algorithm is repeated $\mathcal{O}(\log(L_{max}))$ times for each
binary search and we use $d$ such binary searches leading to the overall running
time.
\end{proof}

\redsquarestuff*
\begin{proof}
  The proof consists of two parts. First, we will explain the general algorithm
  to find the correct convolution. The idea behind this procedure is to
  calculate every convolution entry via binary search. These binary searches are
  then processed in parallel, that is we start an array where each position
  holds a guess for that position. For the next iteration we test for each
  position whether its value is too low or too high and calculate for the next
  iteration a new full array with new guesses. In the second part, we will
  further specify how to achieve the desired running time.

  The general idea is that we will do binary search for the value of
  $(A\oplus B)_v$. Note that we may assume that every array only has non-negative
  entries, since we can add a large enough value to all positions and subtract
  it later from the convolution. Let $A_{max},B_{max}$ be the maximal entries in
  $A$ and $B$ respectively, then $K=A_{max}+B_{max}$ is an upper bound for the
  convolution value in any position and we do binary search for $C_v$ for all
  positions $v<L$ in the area from $0$ to $K$.

  We start by constructing an array $C^{(0)}$ with
  $C^{(0)}_v = \lfloor K/2 \rfloor$. We now want to identify for $C^{(0)}$ which entries are
  too small for the convolution. For that, we apply our oracle from
  \autoref{lem:findUpperBoundConflict} and identify a position $u$ that is
  contradicting the upper bound property.

  We know for this position $u$, that the current value is too small and we need
  to look for a larger value between $\lfloor K/2 \rfloor +1$ and $K$ in the next
  iteration. For the next iteration of the binary search, we will generate a new
  array $C^{(1)}$ and we can set $C^{(1)}_u$ to the next guess for the binary
  search. Before we continue however, we need to find all contradicting
  positions in $C^{(0)}$. To identify these, we update $C^{(0)}_u = K$ and apply
  our oracle again. Since $K$ is large enough, $C^{(0)}_u$ will not return as
  contradicting position and so our oracle will return a new contradicting
  position if there is one.

  By repeating this process, we can identify all values in our current array
  that are too small for the convolution and set them accordingly in
  $C^{(1)}$. At some point, our oracle will return that all entries fulfill the
  upper bound property. For all the entries that were never returned by our
  oracle, we know their value in the maximum convolution lies between $0$ and
  $\lfloor K/2 \rfloor$. We can therefore also set the appropriate next binary search guess
  in $C^{(1)}$ for these positions. We then can apply this same procedure for
  $C^{(1)}$ and inductively compute further arrays. After
  $\mathcal{O}(\log K)$ repetitions, we will have identified the convolution entries for
  all positions.

  Just blindly applying our oracle to the whole array however is too slow, so
  we will split our array in sub-arrays and apply our oracle in a specific way
  to be more efficient. Assume for the following part that every $L_i$ is a
  square number. We then split every array into $m = \sqrt{\Pi(L)} $ sub-arrays
  with up to $\sqrt{L_i}$ entries in each dimension $1\le i \le d$. Enumerate these
  blocks or sub-arrays from $1$ to $m$ and let $I_j$ denote the set of
  positions in block $1\le j \le m$. We now use our oracle to test certain
  combinations of these sub-arrays that we call chunks.

  We can test for two chunks $i,j$ whether for every $v \in I_i$, $u \in I_j$ we
  have that $A_v + B_u \le C_{v+u}$ or find positions that violate this constraint
  using the oracle. The combination of $v+u$ however may be contained not in
  only one but $2^d$ many chunks. We therefore apply the algorithm to arrays
  of size $2\sqrt L_i$ in each dimension. To be precise, we define $A',B'$ to be
  the arrays defined by the chunks $I_i$ and $I_j$ and expand these to the
  desired size by filling them up with small dummy entries $-K$. $C'$ is then
  the sub-array that contains all positions $v+u$ for $v \in I_i$ and $u \in I_j$.

  Now we can apply our oracle in the sense of the binary search procedure that
  we described above. Assume we already have one array $C^{(i)}$ and combine
  every possible combination of chunks. One run of the oracle based on
  \autoref{lem:findUpperBoundConflict} takes time
  $\mathcal{O}(T(2^d\sqrt{\Pi(L)},d) \cdot d \cdot \log(\sqrt{L_{max}}))$, since we expanded the
  size of each chunk to gain equal sized arrays. We repeat this test for all
  possible combinations of chunks, which are in total $m^2 = \Pi(L)$.

  Further, if a chunk yields a contradicting position, we update the value in
  $C^{(i+1)}$ and we repeat the test of the chunk again. However, by setting the
  faulty position with a large value each position can only contradict the upper
  bound property once and therefore the necessary number of repetitions for this
  is again bounded by $\Pi(L)$. In total, we can upper bound the running time by
  $\mathcal{O}(\Pi(L) \cdot T(2^d\sqrt{\Pi(L)},d) \cdot d \cdot \log(\sqrt{L_{max}}))$. Remember that we
  omitted the factor of $\mathcal{O}(\log K)$ for the overall binary search on each
  position. Since by definition of $K$ we have $K \le 2W$ the desired result
  follows.

  Note that if $L_i$ is not a square number, we might end up with additional
  incomplete sub-arrays that we need to combine. This might incur another factor
  of $2^d$ for the number of arrays that have to be combined. As $d$ is fixed,
  this will also not alter the running time further.
\end{proof}

\subsection{Reduction from Knapsack to Convolution}\label{section:bringmann}

In this section, we present a reduction from \knapsone{} to \maxconv{} and
therefore proof \autoref{theo:red:bringmann}. This reduction will be given
through an algorithm that solves \knapsone{} by computing multiple
convolutions. In fact, we will solve the proposed Knapsack problems for all
possible capacities $t'\le t$.

The solution we give is then an array $A$, call it \textit{knapsack array} in
the following, of size $t+\constvec{1}$ such that $A_{t'}$ denotes the maximum
profit of a knapsack with capacity $t'\le t$. Our algorithm follows a basic idea:
We randomly split the items from the instance into different subsets, compute
solution arrays for these subsets and then combine them using convolution.

We note that these arrays have size $t+\constvec{1}$ and that combining arrays of this size takes time $T(\Pi(t+\constvec{1}),d) \in \mathcal O(2^d T(\Pi(t),d))$. When $d$ is fixed, the running time would be $\mathcal O(T(\Pi(t),d))$. For this reason, we will omit that the knapsack array we compute is technically slightly larger than $t$ and still write that computing convolutions of these sizes takes time $\mathcal{O}(T(\Pi(t),d))$.

The randomized algorithm we present for this reduction is a simple variation of the one introduced by Cygan et al.~\cite{cygan} and Bringmann \cite{bringmannSubsetSum}. The trick is to choose specific subsets of items, solve the Knapsack problem for these sizes and then combine the solutions via convolution. As such, it is important to decide how to build fitting sub-instances in the higher dimensional case.

In the following we will present these algorithms and prove their correctness. We start off with a randomized algorithm that will find an optimal solution with a bounded number of items via a technique called color-coding \cite{bringmannSubsetSum}. (see also \autoref{lst:colorcoding}).

\begin{lemma}\label{lem:colcod}
  For any Knapsack instance with item set $I$ and capacity $t$, denote with
  $\Omega(I)_v$ the maximum profit of a subset $I'\subset I$ with total weight smaller or
  equal to $v\le t$.

  Let $\delta \in (0,1)$ and $k \in \mathbb{N}$.  There exists an algorithm that for every
  Knapsack item set $I$ and capacity $t$ computes a random array $S$ in time
  $\mathcal{O}(T(\Pi(t),d)k^2\log{(1/\delta)})$ such that for any $Y\subseteq I$ with
  $|Y| \le k$ and every capacity $v\le t$ we have
  $\Omega(Y)_v \le S_v \le \Omega(I)_v$ with probability $\ge 1-\delta$.
\end{lemma}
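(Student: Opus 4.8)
The plan is to lift Bringmann's color-coding argument to the $d$-dimensional setting, combining the pieces with \maxconv{} in place of the FFT. The guiding idea is that a good ``small'' solution $Y$ with $\abs{Y}\le k$ can be reconstructed once the items of $Y$ land in pairwise distinct groups of a random partition: then it suffices to pick at most one item per group, and such rainbow selections are exactly what a convolution of per-group arrays computes.

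First I would color each item of $I$ independently and uniformly with one of $k^2$ colors, inducing buckets $I_1,\dots,I_{k^2}$. For each bucket I build a monotone array $S_j$ of size $t+\constvec{1}$ with $S_j[v] \coloneqq \max\paren{\{0\}\cup\set{p_i}[i\in I_j,\ w^{(i)}\le v]}$, the best profit of taking \emph{at most one} item of $I_j$ of weight at most $v$; this is obtained by placing each item's profit at its weight position and taking a coordinate-wise prefix maximum. I then form $\hat S \coloneqq S_1\oplus\cdots\oplus S_{k^2}$. Since each $S_j$ is monotone increasing and monotonicity is preserved under $\oplus$, a short induction shows $\hat S_v$ is the maximum profit of a rainbow subset (at most one item per bucket) of total weight at most $v$.

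The upper bound $\hat S_v\le\Omega(I)_v$ then holds deterministically, as every rainbow subset is a subset of $I$. For the lower bound I fix $Y$ with $\abs{Y}\le k$ and estimate the probability that its items get pairwise distinct colors, $\prod_{i=0}^{\abs{Y}-1}\paren{1-i/k^2}\ge 1-\binom{k}{2}/k^2>1/2$. On this event every subset of $Y$ is rainbow, so the selection realizing $\Omega(Y)_v$ is available to $\hat S$ and hence $\Omega(Y)_v\le\hat S_v$ holds simultaneously for all $v\le t$ -- a single coloring event certifies every capacity at once. To boost this constant success probability to $1-\delta$, I repeat the construction $r=\mathcal{O}(\log(1/\delta))$ times independently and let $S$ be the entrywise maximum of the resulting arrays: the upper bound survives the maximum, while for fixed $Y$ the lower bound fails only if all $r$ trials fail, with probability at most $2^{-r}\le\delta$.

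For the running time, each trial is dominated by its $k^2-1$ convolutions of arrays of size $t+\constvec{1}$, each costing $T(\Pi(t),d)$ since $T(\Pi(t+\constvec{1}),d)\in\mathcal{O}(T(\Pi(t),d))$ for fixed $d$; coloring, bucketing, and prefix-maximizing are comparatively cheap. Multiplying by the $\mathcal{O}(\log(1/\delta))$ repetitions gives the claimed $\mathcal{O}(T(\Pi(t),d)\,k^2\log(1/\delta))$. The step I expect to require the most care is not the probabilistic estimate but the bookkeeping that makes the convolution faithfully realize the ``weight $\le v$'' semantics of $\Omega$: keeping every bucket array monotone so that $\hat S$ is monotone, and arguing that one distinct-coloring event suffices for \emph{all} capacities rather than union-bounding over the $\Pi(t)$ positions.
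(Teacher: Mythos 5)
Your proposal is correct and follows essentially the same route as the paper's proof: randomly partition the items into $k^2$ buckets, encode each bucket as an array of size $t+\constvec{1}$, combine the buckets with \maxconv{}, and boost the constant success probability of the birthday-type separation event by $\mathcal{O}(\log(1/\delta))$ independent repetitions combined via entrywise maximum. Your two refinements -- prefix-maximizing the bucket arrays to make the ``weight $\le v$'' semantics explicit, and the union-bound estimate $1-\binom{k}{2}/k^2 > 1/2$ in place of the paper's $(1-1/k)^k \ge 1/4$ -- are only cleaner bookkeeping of the same steps, and if anything make the argument more careful than the paper's version.
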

\begin{proof}
  We initialise $k^2$ arrays $Z^{(1)}, \cdots , Z^{(k^2)}$ of size
  $t+\constvec{1}$ and randomly distribute items from $I$ into
  $Z^{(1)},\cdots, Z^{(k^2)}$. The probability for an item to land in either array
  is equally distributed. When an item with weight $v$ and profit $p$ is
  assigned to $Z^{(i)}$, we set $Z^{(i)}_v \eqqcolon p$. If two items of the same
  weight are assigned to one array, we keep the highest profit. Ultimately we
  want that items from an optimal solution are split among our arrays.

Take an optimal solution $Y$ with at most $k$ elements. The probability that all these items are split in different $Z^{(i)}$ can be bounded by:

\begin{equation*}
  \frac{k^2-1}{k^2} \cdot \frac{k^2-2}{k^2}  \cdots  \frac{k^2-(|Y|-1)}{k^2} \ge
  \paren[\Big]{1- \frac{k-1}{k^2}}^k \ge \paren[\Big]{1- \frac{1}{k}}^k \ge \paren[\Big]{\frac{1}{2}}^2
\end{equation*}
So with probability of at least $1/4$, all elements of $Y$ are split among the
$k^2$ arrays, that is no two items in $Y$ are in the same array. We build the
convolution of all $Z^{(i)}$'s and get $S$ as a result in running time
$\mathcal{O}(T(\Pi(t),d)k^2)$. By repeating this process
$\mathcal O(\log{(1/\delta)})$ times, we can guarantee the desired probability, while
achieving the proposed total running time.
\end{proof}

\lstset{
    escapeinside={(*}{*)},          
}

\begin{lstlisting}[caption={ColorCoding}\label{lst:colorcoding},float=h,
abovecaptionskip=-\medskipamount,]
ColorCoding((*$ I,t,k,\delta$*))
for (*$j \in \{1, \cdots ,\lceil \log_{4/3}(1/\delta)\rceil\}$*)
    randomly partition (*$I = Z^{(1)} \cup \cdots \cup Z^{(k^2)}$*)
    compute (*$P^{(j)} = Z^{(1)} \oplus \cdots \oplus Z^{(k^2)}$*)
end for
for all (*$v\le t$*) set (*$S_v \coloneqq \max_j P^{(j)}_v$*)
return (*$S$*)
\end{lstlisting}

We now split a given item set $I$ with $n=|I|$ into disjoint layers based on the capacity of the knapsack ${t}$. Denote with $L_j^i$ the set of items with a weight vector $w$ such that
$w_i \in ({t}_i/2^j,{t}_i/2^{j-1}]$ for $i<d$ and $j< \lceil \log n\rceil$. Denote with $L^i_{\lceil \log n\rceil}$ the last layer holding items whose weight vector has that $w_i \le {t}_i / 2^{\lceil \log n\rceil -1}$.

Now every item appears in $d$ different layers. Choose therefore for an item one layer $L^i_j$ that contains it and such that $j$ is minimal. If there are multiple possibilities, choose arbitrarily among these and remove said item from any other layers. We note that from a layer $L_j^i$, we can choose at most $2^{j-1}$ items for a knapsack solution as otherwise the $i$th capacity constraint would be exceeded. By choice of layers, it is also possible for any solution to contain that many items. As we have chosen $j$ minimal, there can be no other weight constraint invalidating our solution.

In the following lemma, we will now prove how to solve one of these layers. The algorithm for that can be seen in \autoref{lst:layer}.

\begin{lstlisting}[caption={Algorithm to solve one layer using color-coding}\label{lst:layer},float=h,
abovecaptionskip=-\medskipamount,]
SolveLayer((*$L,t,j,\delta$*))
Set (*$l=2^j, \gamma= 6\log(l/\delta)$*)
Let (*$m= l/\log(l/\delta)$*) be rounded to the next power of 2.
if (*$l < \log(l/\delta)$*) then return ColorCoding((*$L,t,l,\delta$*))
randomly partition (*$L = A_1 \cup \cdots \cup A_m$*)
for (*$k \in [m]$*)
    (*$P^{(j)}$*) = ColorCoding((*$A_k,(2\gamma/l)t, \gamma, \delta/l$*))
for (*$h \in [\log m]$*)
    for (*$k \in m/(2^h)$*)
        (*$P_k = P_{2k-1} \oplus P_{2k}$*)
    end for
end for
return (*$P_1$*)
\end{lstlisting}

\begin{lemma}\label{lem:solveLayer}
  For all layers $L_j^i$ and $\delta \in (0, 1/4]$ there exists an algorithm that
  computes an array $W$ in time
  $\mathcal{O}( T(12 \Pi(t),d)\log (d\cdot t_{max})\log^3{(2^j/\delta)})$, where for each capacity
  $v\le t$ we have that $W_v=\Omega(L_j^i)_v$ with a probability of at least $1-\delta$.
\end{lemma}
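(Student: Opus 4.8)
The plan is to establish both the correctness and the running time of the procedure in \autoref{lst:layer}. Two structural facts about the layer $L \coloneqq L_j^i$ drive everything, and both follow from the \emph{minimal}-$j$ assignment rule. Writing $l = 2^j$, the first is that every item $e\in L$ satisfies $w^{(e)} \le (2/l)\cdot t$ in \emph{every} coordinate, since minimality of $j$ forces $w^{(e)}_{i'} \le t_{i'}/2^{j-1} = (2/l)t_{i'}$ for all $i'\in[d]$, not only for the defining coordinate $i$; the second, a consequence of the first, is that any feasible subset of $L$ uses at most $l$ items. The algorithm exploits this by randomly splitting $L$ into $m \approx l/\log(l/\delta)$ bins $A_1,\dots,A_m$, solving each bin for solutions of at most $\gamma = 6\log(l/\delta)$ items at the reduced capacity $(2\gamma/l)t$ via \Cref{lem:colcod}, and merging the $m$ resulting arrays along a balanced binary tree of convolutions.

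For correctness I fix a capacity $v \le t$ and an optimal subset $Y\subseteq L$ with $|Y|\le l$ realising $\Omega(L)_v$. The bound $W_v \le \Omega(L)_v$ holds deterministically: by \Cref{lem:colcod} every entry of bin $k$ is at most $\Omega(A_k)_{\cdot}$, and as the bins are disjoint, convolution combines these into values at most $\Omega(L)_v$. For the reverse inequality I argue in three steps. First, a balls-into-bins estimate: since $\mathbb{E}[\,|A_k\cap Y|\,] \le l/m \le \log(l/\delta)$ and $\gamma = 6\log(l/\delta)$, a Chernoff bound gives per-bin failure $(\delta/l)^{\mathcal{O}(1)}$, so a union bound over the $m$ bins keeps the total concentration failure below $\delta/2$ (this is where $\delta\le 1/4$ is used). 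Second, conditioned on $|A_k\cap Y|\le\gamma$, the coordinatewise weight bound yields $\sum_{e\in A_k\cap Y} w^{(e)} \le \gamma(2/l)t = (2\gamma/l)t$, so $Y\cap A_k$ is feasible for the reduced capacity and, by \Cref{lem:colcod}, bin $k$ attains at least the profit of $Y\cap A_k$ with error at most $\delta/l$; a union bound over the $m\le l/2$ calls adds at most another $\delta/2$. Third, because each bin array records the best profit at weight \emph{at most} the index, and $Y$ splits disjointly across the bins with total weight at most $v$, the convolution tree can absorb the slack and attains $p(Y)=\Omega(L)_v$ at position $v$. Altogether $W_v = \Omega(L)_v$ with probability at least $1-\delta$.

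For the running time I use the assumed shape $T(\Pi(L),d) = c^{\mathcal{O}(d)}\Pi(L)^{\alpha}$ with $\alpha \ge 1$. The parameters are tuned so that $m\cdot(2\gamma/l) = 12$; hence every array in the convolution tree has size at most $12t$ and the top convolution costs $T(12\Pi(t),d)$. The crucial point is that the reduced capacity makes $m\cdot(2\gamma/l)^{d\alpha}$ a constant for fixed $d$, so summing the costs of the $m$ bin solves, and by the same geometric estimate of all $\log m$ levels of the tree, gives $\mathcal{O}(T(12\Pi(t),d))$ despite $m$ being large. The per-bin overhead $\gamma^2\log(l/\delta) = \mathcal{O}(\log^3(2^j/\delta))$ from \Cref{lem:colcod}, together with the $\mathcal{O}(\log(d\cdot t_{\max}))$ factor for the tree depth and for arithmetic on $d$-dimensional positions of magnitude $\mathcal{O}(d\,t_{\max})$, produces the stated bound. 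The base case $l < \log(l/\delta)$ forces $l = \mathcal{O}(\log(1/\delta))$, so the single full-capacity call to \Cref{lem:colcod} fits the same budget.

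I expect the main obstacle to be the probabilistic bookkeeping: one must choose $\gamma$ and $m$ so that the concentration failure and the $m$ independent color-coding failures each stay below $\delta/2$ while simultaneously forcing the product $m\cdot(2\gamma/l)$ to collapse to the constant $12$ that keeps the merged array size proportional to $t$ and makes $m\cdot(2\gamma/l)^{d\alpha}$ bounded. The subtle higher-dimensional ingredient is that the single scalar budget $(2\gamma/l)t$ must dominate the weight of $Y\cap A_k$ in all $d$ coordinates at once; this is exactly what the minimal-$j$ layer assignment secures, and it is why the one-dimensional scheme of Bringmann and Cygan et al.~\cite{bringmannSubsetSum, cygan} transfers without a dimension-dependent blow-up in the capacity.
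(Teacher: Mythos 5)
Your proposal is correct and follows essentially the same route as the paper's own proof: the identical random partition into $m \approx l/\log(l/\delta)$ bins, color-coding per bin at the reduced capacity $(2\gamma/l)t$ justified by the minimal-$j$ layer assignment, the tuning $m\cdot(2\gamma/l)=12$ combined with superhomogeneity of $T$ to bound the bin solves and the binary convolution tree, and a Chernoff-plus-union-bound analysis of the failure probability. Your $\delta/2+\delta/2$ bookkeeping is just a cosmetic variant of the paper's $(1-m\delta/l)^2 \ge 1-\delta$ estimate using $m \le l/2$, so the two arguments are interchangeable.
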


\begin{proof}

Consider a layer $L^i_j$ and set the following parameters based on the algorithm in \autoref{lst:layer}: $l=2^j$, $m=l/\log{(l/\delta)}$ rounded up to the next power of $2$ and $\gamma = 6\log{(l/\delta)}$.

Consider the case where $l<\log{(l/\delta)}$, we then compute the desired result in time $\mathcal{O}(T(\Pi(t),d)l^2\log{(1/\delta)}) \in \mathcal O(T(\Pi(t),d)\log^3{(l/\delta)})$. We do this by applying \autoref{lem:colcod} with probability parameter $\delta$ and bound on the number of items $l$.

In the general case where $l\ge \log{(l/\delta)}$, we split the item set into
$m$ disjoint subsets $A_1,\cdots , A_m$. We apply again \autoref{lem:colcod}.  We use
$\delta/l$ for the probability parameter and allow $\gamma$ items for each subset.
Furthermore, we set the capacity to $\frac {2\gamma}{l}\cdot t$, which is the maximum
weight of any packing that uses at most $\gamma$ items with size at most
$2^{j-1}$, which is the maximum size in layer $L^{i}_{j}$. To calculate a
solution array of $A_k$ via algorithm \autoref{lst:colorcoding}, we need time:
\begin{equation*}
\mathcal{O}(T( \Pi(2\gamma/l \cdot t) ,d)\gamma^2\log{(l/\delta)})
\in \mathcal{O}(T(\Pi(12\log{(l/\delta)}/l \cdot t),d)\log^3{(l/\delta)}).
\end{equation*}
We want to note two things for our estimation. First, for $x > 0$ we have that $T(x,d) \in \Omega(x)$, so for $y \ge 1$ we have that $y \cdot T(x,d) \le (T(y\cdot x,d)$. For the same reason, we have $y\cdot \Pi(v) \le \Pi(y\cdot v) $ for some  vector $v$ with $v \ge \constvec{1}$.
Denote with $m' = l/\log{(l/\delta)}$ the value of $m$ before rounding and conclude that $2m' \ge m$.

Together to solve all $A_k$ we therefore require in total:
\begin{align*}
\mathcal{O}( m' T(\Pi(12\log{(l/\delta)}/l \cdot t),d)\log^3{(l/\delta)})
&\in \mathcal{O}(T(\Pi(m'\cdot12\log{(l/\delta)}/l \cdot t),d)\log^3{(l/\delta)})\\
&\in \mathcal{O}(  T(\Pi(12 \cdot t),d)\log^3{(l/\delta)})
\end{align*}
 We now need to combine all the $A_k$ and we do so in a binary tree fashion. In round $h \in [\log m]$, we have $m/{2^h}$ convolutions, each with a result array of size $2\cdot 2^h \gamma/l \cdot t$. The convolutions take total time:

 \begin{align*}
 \sum_{h=1}^{\log m}{\frac{m}{2^h} T(\Pi(2 \cdot 2^h \gamma /l \cdot t),d)}
 &\le \sum_{h=1}^{\log m}{2\cdot \frac{m'}{2^h} T(\Pi(2^h \cdot  12 \log{(l/\delta)}/l \cdot t),d)} \\
 &\le \sum_{h=1}^{\log m}{2 \cdot T(\Pi(\frac{m'}{2^h} \cdot 2^h \cdot  12 \log{(l/\delta)}/l \cdot t),d)} \\
 &= \sum_{h=1}^{\log m}{2\cdot T(\Pi( 12 \cdot t),d)} \in \mathcal{O}(T(\Pi(12t),d) \log m)
 \end{align*}

  We might require arithmetic overhead due to the large values of profits. A
  knapsack with size $t$ may only hold $||t||_1$ items whose weight is not
  $\nullvec{}$. Note that we may remove items with weight $\nullvec{}$ without
  loss of generality and we can conclude that the maximum profit is bounded by
  $||t||_1 \cdot W \in \mathcal O (d \cdot t_{max} W)$. We add a factor of $\log(d \cdot
  t_{max})$ to the running time to compensate for the arithmetic overhead.

  All that is left is to argue that our algorithm works correctly and with the proposed probability. We note that the first case follows immediately from the \autoref{lem:colcod}. For the general case fix an optimal item set $Y$ and set $Y_k = Y \cap A_k$ for $k\in [m]$. Based on the distribution of items $|Y_k|$ is distributed as the sum of $|Y|$ independent Bernoulli random variables. Each variable has a success probability of $1/m$ and the expected value is $\mu \coloneqq \mathbb{E}[|Y_k|]= |Y|/m \le l/m \le \log(l/\delta)$. According to a standard Chernoff bound, it holds that $Pr[|Y_k| \ge \lambda] \le 2^{-\lambda}$ for any $\lambda \ge 2e\cdot \mu$. By using $\gamma = 6\mu\ge 2e \cdot \mu$, we get that
  $Pr[|Y_k| \ge \gamma] \le 2^{-\gamma}= \frac{1}{2^\gamma} \le \frac{1}{2^{\log(l/\delta)}}
  =  \frac{\delta}{l}$ and this describes the probability that some partition $A_k$ has too many elements from $Y$.

  With a probability of each at least $1-\frac{\delta}{l}$, we have that $|Y_k| \le \gamma$ and therefore our algorithm for \autoref{lem:colcod} is applicable to find the solution generated by $Y_k$ or one of equal value for the same weight. The probability that this happens for all partitions is at least $(1-\frac{\delta}{l})^m \ge 1-m\frac{\delta}{l}$ using the Bernoulli inequality. The probability that we find the solution generated by $Y_k$ based on \autoref{lem:colcod} is also at least $1-\frac{\delta}{l}$, so the probability that this works also for all $i\in [m]$ is $\ge 1-m \frac{\delta}{l}$. The probability of both events, the proper split and finding the right solution is then at least $(1-m \frac{\delta}{l})^2 \ge 1-2\cdot m\frac{\delta}{l}$. We can further note that $m\le l/2$ and therefore $ 1-2\cdot m\frac{\delta}{l} \ge 1-\delta$.
\end{proof}
Now that we can solve each layer, all that remains is to combine the solutions for each layer. To do so, we simply apply convolution again.

\begin{proof}[Proof of \autoref{theo:red:bringmann}]

As discussed earlier, we split the set of items into the respective layers and solve each layer via \autoref{lem:solveLayer}. We use as probability parameter $\delta' \coloneqq \delta/\lceil \log(n) \rceil$. To solve a layer, we require time
$\mathcal{O}( T(12 \Pi(t),d)\log (d\cdot t_{max})\log^3{(\log n/\delta)})$ and we do this for all $d\cdot \lceil \log n \rceil$ layers. We then use the convolution algorithm to combine the resulting arrays from each layer in time $\mathcal O(T(\Pi(t), d)\cdot  d\cdot \log n)$. Putting all this together, we require a total time in $\mathcal{O}(T(12 \Pi(t),d)\log (d\cdot t_{max})\log^3{(\log n/\delta)}\cdot  d\cdot \log n)$.

For the probability, consider some optimal solution $Y$ and $Y_j^i \coloneqq L_j^i \cap Y$. Based on \autoref{lem:solveLayer}, we find $Y_j^i$ or a solution of equal profit for the same weight with a probability of $\ge 1-\delta' = 1-\delta/\lceil \log(n) \rceil $. The chance that we find the proper solutions over all layers is then at least $(1-\delta/\lceil \log(n) \rceil)^{\lceil \log n \rceil} \ge 1-\delta $ with the Bernoulli inequality.
\end{proof}

\section{Calculate \texorpdfstring{\(d\)}{d}-dimensional Convolution with 1-dimensional Convolution}\label{sec:calc-d-conv-helix}
We can interpret \((\max, +)\)-convolution as multiplication of polynomials over
the \((\max, +)\)-semiring, also called the Arctic semiring.  \(d\)-dimensional
convolution is multiplication of polynomials with \(d\) variables
\(x_1, \ldots, x_d\).  More precisely, let \(A, B\) be two \(d\)-dimensional arrays
of size \(L\).  Then, the corresponding polynomial to \(A\) is given by
\begin{equation*}
  \label{eq:poly-trans}
  p(A) \coloneqq \sum_{0\leq v < L} A[v]\prod_{i=1}^dx_i^{v_i}.
\end{equation*}
The Kronecker map \cite{Kronecker+1882+1+122} as described in
\cite{DBLP:journals/jsc/Pan94} maps \(x_{k + 1}\) to \(x^{D(1)\cdots D(k)}\) for
every \(k\in[d - 1]_0\) where \(D(i) = 2L_i - 1\) for every \(i\in[d]\).  This map
is generalized to polynomials accordingly and allows us to obtain an
one-dimensional array this way by adding dummy values.  An example for this
construction is depicted in \Cref{fig:ex:poly-trans}.  Let \(\tilde{p}(A)\) and
\(\tilde{p}(B)\) be the results of this transformation.

\begin{figure}[bth]
  \begin{center}
  \begin{tikzpicture}[semithick, 3d view]
    \foreach \z in {0, 1} {
      \begin{scope}[canvas is xz plane at y=-2.5*\z]
        \draw[color=gray, step=.5] (0, 0) grid +(1.5, -2);
        \foreach \y in {0, 1, 2, 3} {
          \foreach \x in {0, 1, 2} {
            \node[transform shape] at (.5*\x + .25, -.5*\y - .25)
            {\pgfmathparse{int(65 + (1 - \z) * 12 + \y * 3 + \x)}\char\pgfmathresult};
          }
        }
      \end{scope}
    }
  \end{tikzpicture}

  \vspace*{.5cm}

  \begin{tikzpicture}[semithick, scale=.8, transform shape]
    \foreach \x in {0, ..., 52} {
      \tikzmath{
        int \a, \b;
        \a = mod(\x, 27);
        \b = \x / 27;
        if \b > 0 then {
          \a = \a + 1;
        };
      }
      \node[draw, inner sep=0pt, minimum width=.5cm, minimum height=.5cm] at (.5*\a, -\b) {\tikzmath{
        int \a, \b, \c;
        \a = mod(\x, 5);
        \b = mod(\x / 5, 7);
        \c = mod(\x / 35, 3);
        if \a < 3 && \b < 4 && \c < 2 then {
          int \res;
          \res = 65 + 12 * \c + 3 * \b + \a;
          print{\char\res};
        } else {
          print{\(-\)};
        };
      }};
    }
    \draw[-{Stealth[round]}] (13.5, 0) .. controls +(.5, 0) and +(.5, 0) .. (13.5, -.5) -- (0, -.5) .. controls +(-.5, 0) and +(-.5, 0) .. (0, -1);
  \end{tikzpicture}
  \end{center}
  \caption{Example for the transformation in 3 dimensions with
    \(L = (3, 4, 2)\tran\).}
  \label{fig:ex:poly-trans}
\end{figure}

Now, we can calculate \(\tilde p(A) \cdot \tilde p(B)\) with an one-dimensional
\((\max, +)\)-convolution algorithm and transform the result back according to
the transformations above to obtain an \(d\)-dimensional array.  The result is
correct since the polynomial multiplication is correct due to the added padding
in the definition of \(D(i)\) \cite{DBLP:journals/jsc/Pan94}.

To show \Cref{prop:calc-conv-helix} is remains to show that \(\deg(\tilde p(A))
+ 1 \leq
2^d\prod(L)\) since \(\deg(\tilde p(A))\) is the length of the resulting arrays for
the one-dimensional convolution.  We have
\begin{align*}
  2\deg(\tilde p(A))\tag{\(L - \constvec{1}\) is the maximum index.}
  &= 2\deg\paren[\Big]{A[L-\constvec{1}]\prod_{i=1}^d\paren{x^{D(1)\cdots D(i - 1)}}^{L_i - 1}}\\
  &= 2\sum_{\ell = 1}^{d}(L_\ell - 1)\prod_{m=1}^{\ell - 1}D(m)\\
  &\leq \sum_{\ell = 1}^{d}\paren[\Big]{\prod_{m=1}^{\ell - 1}2 L_m}2L_{\ell}\\
  &= 2L_1 + 2L_1 \sum_{\ell = 2}^{d}\paren[\Big]{\prod_{m=1}^{\ell - 1}2L_m}2L_{\ell}\\\tag{With
  induction}
  &= \sum_{\ell = 1}^{d}2^{\ell}\prod_{m = 1}^{\ell}L_m\\
  &= \paren[\bigg]{\sum_{\ell = 1}^{d}2^{\ell}}\prod(L)\\
  &\leq 2^{d+1}\prod(L)\\
  \implies& \deg(\tilde p(A)) \leq 2^d\prod(L)
\end{align*}
In conclusion we have show the following lemma.
\calcconvhelix*

\end{document}